\newcommand\ketbra[2]{\ket{#1}\!\bra{#2}}
\newcommand\proj[1]{\ketbra{#1}{#1}}
\DeclareMathOperator\tr{tr}
\DeclareMathOperator\argmax{arg\ max}
\DeclareMathOperator\LOCC{LOCC}
\DeclarePairedDelimiter\abs\lvert\rvert
\DeclarePairedDelimiter\norm\lVert\rVert
\DeclarePairedDelimiter\floor\lfloor\rfloor
\newtheorem{theorem}{Theorem}
\newtheorem{lemma}[theorem]{Lemma}
\newtheorem*{lemma*}{Lemma}
\begin{document}

\author{Michael Gaida}
\email{michael.gaida@student.uni-siegen.de}
\affiliation{Naturwissenschaftlich--Technische Fakult\"{a}t, Universit\"{a}t Siegen, Walter-Flex-Stra{\ss}e 3, 57068 Siegen, Germany}
\author{Matthias Kleinmann}
\email{matthias.kleinmann@uni-siegen.de}
\affiliation{Naturwissenschaftlich--Technische Fakult\"{a}t, Universit\"{a}t Siegen, Walter-Flex-Stra{\ss}e 3, 57068 Siegen, Germany}

\title{Seven definitions of bipartite bound entanglement}

\begin{abstract}
An entangled state is bound entangled if one cannot combine any number of copies of the state to a maximally entangled state by using only local operations and classical communication. If one formalizes this notion of bound entanglement, one arrives immediately at four different definitions. In addition, at least three more definitions are commonly used in the literature, in particular in the very first paper on bound entanglement. Here we review critical distillation protocols and we examine how different results from quantum information theory interact in order to prove that all seven definitions are eventually equivalent. Our self-contained analysis unifies and extends previous results scattered in the literature and reveals details of the structure of bound entanglement.
\end{abstract}

\maketitle

\section{Introduction}

\begin{figure}
\centering
\includegraphics[width=0.65\linewidth, trim= 9.1cm 3.5cm 9.1cm 4.7cm, clip ]{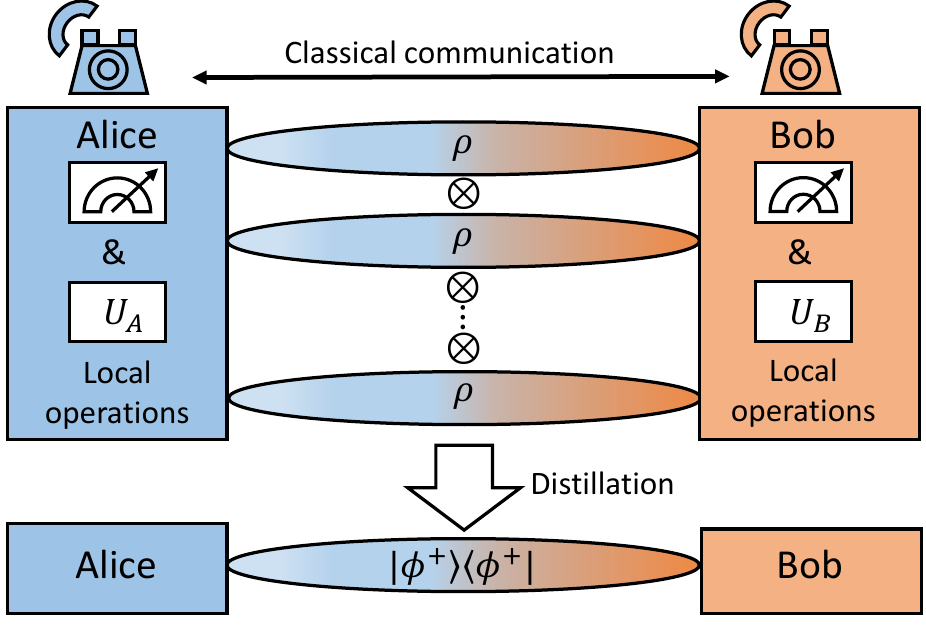}
\caption{Sketch of a general distillation procedure: Alice and Bob are in distant laboratories. They are able to perform any operation in their respective laboratories and to exchange classical information. They possess a large number $N$ of bipartite entangled states $\rho$, each with rather little entanglement. Entanglement distillation is now a procedure in which Alice and Bob consume their $N$ copies in order to produce fewer, but almost maximally entangled states (in this case, approximating $\ket{\phi^{+}}$) by means of their available operations, namely local operations and classical communication (LOCC).} 
\label{fig_sketch}
\end{figure}

Many quantum information protocols like teleportation or device-independent cryptography require quantum resources in the form of maximally highly entangled states. Unfortunately, due to interactions of the system with the environment and similar experimental imperfections, the quantum state of the system deteriorates and becomes more and more noisy, until it is insufficiently entangled for the desired application. Entanglement distillation describes the opposite procedure where more and more systems in the same entangled state can be combined into a single system in an increasingly entangled state, see Figure~\ref{fig_sketch}. The allowed operations in this process are limited to local operations and classical communication (LOCC), see also Figure~\ref{fig_sketch}. Surprisingly, not all entangled states allow such a procedure \cite{Horodecki_1998} and such states are then bound entangled.

Undistillability as an operational feature of quantum states is in some sense dual to separability. While that latter makes constraints on the resources when constructing the state, the former is based on limitations applicable when using the state. Even with this in mind, it is not true that bound entanglement is generally useless. For example, Bell inequalities can be violated using bound entangled states \cite{Vertesi2014DisprovingThePeres} and quantum key distribution is possible with bound entanglement \cite{Horodecki2005SecureKeyFrom,Horodecki2008LowDimensionalBound}. Moreover, experiments have been performed to verify bound entanglement \cite{DiGuglielmo2011ExperimentalUnconditionalPreparation,Hiesmayr2013ComplementarityRevealsBound}, even though a rigorous experimental verification is yet outstanding \cite{Sentis2018BoundEntangledStates}.
In fact, although the research on bound entanglement was initiated many decades ago, the concept is still subject of current research
\cite{lami_catalysis_2023,popp_comparing_2023,ozaydin_superactivating_2023} and connections to various fields such to many-body physics, and quantum field theory \cite{gullans_dynamical_2020, vardhan_bound_2022, klco_entanglement_2023} have been established recently. Maybe the most intriguing open question about distillability is whether the set of undistillable states is convex. Since distillability refers to what one can do with the state, it is not obvious that the set is convex and indeed, this is the case if and only if there exist bound entangled states with a negative partial transpose \cite{Shor2001NonadditivityOfBipartite}. For reviews on entanglement distillation we refer the reader to Refs.~[\onlinecite{Horodecki2009QuantumEntanglement}, \onlinecite{Clarisse2006EntanglementDistillation}]. Note that both references are roughly based on what we call ``rate-distillable'' below.

Despite of the straightforward idea underlying distillable entanglement, there are several ways to introduce a mathematical rigorous definition of the concept. Curiously, the maybe most frequently used definition, which we name ``projection distillability'' below, occurred already in the first work on bound entanglement \cite{Horodecki_1998} but this definition is not directly related to the intuitive statement above. In this work, we provide a modern view on possible definitions of bound entanglement and their classification, and we give self-contained proofs of the essential theorems in the modern language of quantum information theory.
In our definitions, we vary the criterion when distillation is achieved, but always consider LOCC transformations. This is in contrast to Ref.~[\onlinecite{Rains1999}] where Rains considers how different classes of transformations can be incorporated into the definition of bound entanglement, while focusing on a distillation criterion closely resembling ``rate distillability'' defined here.

We find two traits to classify our seven definitions, see Figure~\ref{fig_def}. The first characteristic trait is whether the definition requires a non-vanishing distillation rate, yielding an increasing number of purified states upon investing more and more instances of the state. Conversely, an asymptotically large number of states might only produce a single entangled state. The other distinctive feature is, whether the distillation protocol aims for a state close to a maximally entangled state, or, whether any entangled two-qubit state is sufficient. The arrows in Figure~\ref{fig_def} indicate our strategy to prove the equivalence of all definitions. While most proofs are straightforward, two steps require the use of sophisticated distillation protocols: We need the recurrence protocol \cite{Recurrenceoriginal} to bring any entangled two-qubit state arbitrarily close to a maximally entangled two-qubit state and the hashing protocol \cite{hashingoriginal, hashinglater} is required to distill low-entropy states with a finite rate. We traced the idea of using theses protocols for distillation back to Ref.~[\onlinecite{Horodecki1997ReductionCriterion}].

In this article, we only study bipartite bound entanglement. In fact, there is a key difference between bipartite and multipartite bound entanglement. In the multipartite case, entanglement between some of the parties can be randomly distributed to all parties in such a way, that the mixedness in the state cannot be eliminated until two or more parties meet \cite{Smolin2001FourPartyUnlockable}. This is conceptually different from the bipartite case, where there are no different ways to distribute entanglement, but still bound entanglement can be found to exist as soon as the local dimension of at least one party is four or higher \cite{entangleandfidelity, Dur2000DistillabilityAndPartial}.

This paper is structured in the following manner. Basic concepts like LOCC, local filters, and the trace norm are introduced in Section~\ref{sec_prelim}. In Section~\ref{sec_collection_defs} our seven definitions are presented. Important methods and arguments that are frequently used in the following proofs are provided in Section~\ref{sec_methods}. 
Then we provide proofs which are based on standard methods in Section~\ref{sec_straightforward}. In Sections~\ref{sec_q_to_prob} and \ref{sec_hashing} we use elaborate quantum information protocols in order to show the two remaining implications. We conclude in Section~\ref{sec_conclusions}.


\section{Preliminaries}
\label{sec_prelim}

The set of bipartite LOCC protocols is given by any quantum channel \cite{nielsen_chuang_2010} that can be implemented by the two parties Alice and Bob using only finite and local resources. More precisely, an LOCC protocol consists of a finite number of rounds and can utilize at most a finite amount of shared randomness available. In each round, Alice and Bob perform a local operation at their respective subsystem and exchange a finite amount of classical information afterwards. The total transformation of the state is then denoted by $(\Lambda) \in \LOCC$. If Alice and Bob decide whether the transformation was a success or failure at the end of the protocol, they obtain corresponding conditional transformations $\mathcal E$ and $\mathcal E'$ and we write $(\mathcal E, \mathcal E') \in \LOCC$. For a mathematically rigorous treatment of LOCC we refer the reader to Ref.~[\onlinecite{whatyoualways}]. Finally, a local filter is any operator $A$ or $B$ on Alice's or Bob's subsystem, respectively.

We use the trace distance $\norm{\tau-\rho}_1=\tr\abs{\tau-\rho} = \sum \abs{\lambda_j}$, where $\lambda_j$ are the eigenvalues of $\tau - \rho$, to quantify the distinguishability between two states $\tau$ and $\rho$. A complementary quantity is the fidelity $F(\tau, \rho)$ where we only use the case where $\tau=\proj\psi$ is pure so that $F(\tau,\rho)=\braket{\psi|\rho|\psi}$. A connection between these two notions is given by the Fuchs--van de Graaf inequalities \cite{fvdg},
\begin{equation}\label{eq_fuchs_graaf}
  1 - \braket{\psi|\rho|\psi} \leq \frac{1}{2} \norm{ \rho - \proj\psi}_1 \leq \sqrt{1 - \braket{\psi|\rho|\psi}}.
\end{equation}
We mention, that the choice of the trace distance is not completely arbitrary. Crucially, the fidelity and the trace distance can be mutually upper and lower bounded by the above relations and those bounds are independent of the dimension of the system.

A two-qubit state is maximally entangled, if it is pure and $\ket\psi=\frac 1{\sqrt 2} \ket{\alpha_1}\ket{\beta_1}+\frac 1{\sqrt 2} \ket{\alpha_2}\ket{\beta_2}$ where $\braket{\alpha_1|\alpha_2}=0$ and $\braket{\beta_1|\beta_2}=0$. We can use local unitaries to map the basis $\lbrace \ket{\alpha_1},\ket{\alpha_2}\rbrace$ and $\lbrace \ket{\beta_1},\ket{\beta_2}\rbrace$ both to $\lbrace\ket0,\ket1\rbrace$ and hence any maximally entangled state is equivalent to $\ket{\Phi^+}=\frac1{\sqrt 2}(\ket{00}+\ket{11})$ under local unitaries.


\section{Definitions of bipartite distillability}
\label{sec_collection_defs}

\begin{figure}
\centering
\includegraphics[width=0.65\linewidth, trim= 10.5cm 5.5cm 7.3cm 2.2cm, clip]{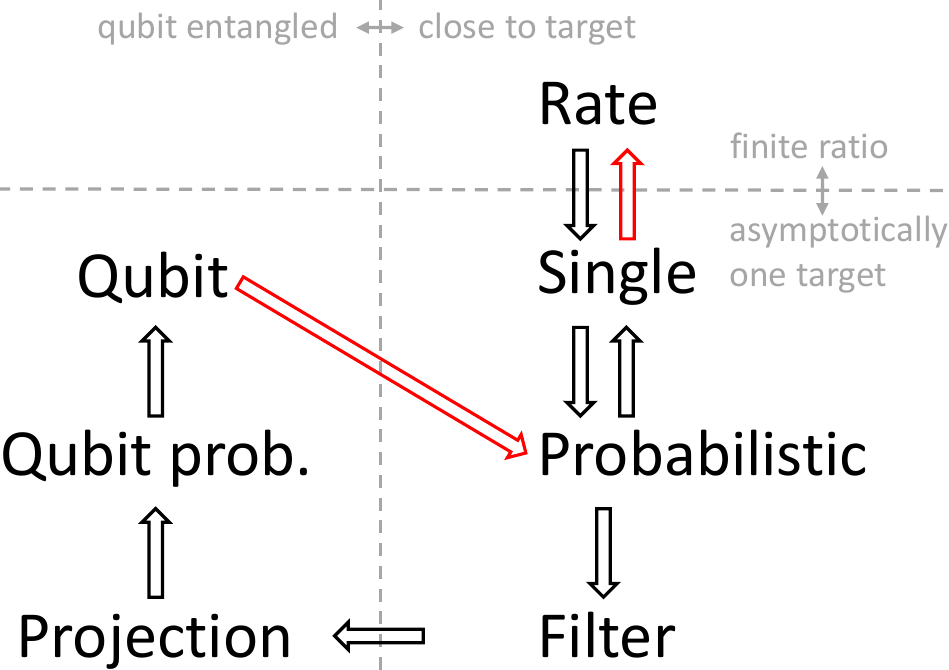}
\caption{Relations between the seven definitions of distillability. The arrows correspond to the direction of the proof that we use in order to show the equivalence of all definitions. Black arrows correspond to straightforward proofs and red arrows denote proofs based on elaborate quantum information protocols. The horizontal dashed line separates rate distillability from definitions allowing an asymptotically small yield. The vertical dashed line separates definitions where the distilled state is close to a maximally entangled target state and definitions where the distilled state is any entangled two-qubit state. }
\label{fig_def}
\end{figure}

Here we present rigorous definitions of the aforementioned concept of distillability. A bipartite state is distillable if the two parties can approximate via LOCC a maximally entangled state by consuming a finite number of copies of the original state. By convention one chooses a maximally entangled two-qubit state and we use here, without loss of generality, see Section~\ref{sec_prelim}, $\sigma_+=\proj{\Phi^+}$ with $\ket{\Phi^+}=\frac1{\sqrt 2}(\ket{00}+\ket{11})$ as target state. 

A bipartite state $\rho$ is
\begin{enumerate}[align=left,font=\em]
\item[rate-distillable] if there is a positive rate $R$ such that
\begin{equation}
  \inf \set{ \norm{ \sigma_+^{\otimes M} - \Lambda(\rho^{\otimes N}) }_1 |
  N,M \in \mathbb{N},\; (\Lambda) \in \LOCC,\; M \geq RN } = 0.
\end{equation}

Rate distillability reflects the strongest notion of distillation. Investing $N$ copies of $\rho$ we obtain at least $RN$ purified copies with an asymptotically vanishing error, that is, the number of purified copies grows with a fixed rate, as the number of invested copies grows.

\item[single-distillable] if
\begin{equation}
  \inf \set{ \norm{ \sigma_+ - \Lambda( \rho^{\otimes N}) }_1 |
  N \in \mathbb{N},\; (\Lambda) \in \LOCC } = 0.
\end{equation}

For single distillability we drop the requirement of a finite rate and  require just a single purified copy in the asymptotic limit.

\item[probabilistic-distillable] if
\begin{equation}
  \inf \set{ \norm{ \sigma_+ - \tfrac1p\mathcal{E}( \rho^{ \otimes N}) }_1 |
  N \in \mathbb{N},\; (\mathcal{E},\mathcal E') \in \LOCC,\;
  p=\tr[ \mathcal E(\rho^{\otimes N})] > 0 } = 0.
\end{equation}
Probabilistic distillability is a loosened version of single distillability that allows the LOCC protocol to not always succeed.

\item[filter-distillable] if
\begin{equation}
  \inf \set{ \norm{ \sigma_+ - (A \otimes B)\rho^{\otimes N}(A \otimes B)^\dagger }_1 |
  N \in \mathbb{N},\; A,B \text{ local filters} } = 0.
\end{equation}

Filter distillability allows us to map a state arbitrarily close to a maximally entangled state by means of local filters.
\end{enumerate}

The remaining definitions do not require to get close to a specific maximally entangled state. 

A bipartite state $\rho$ is
\begin{enumerate}[align=left,font=\em]

\item[projection-distillable] if there is an integer $N$ and local projectors $\Pi_A$, $\Pi_B$ onto two-dimensional subspaces such that $( \Pi_A \otimes \Pi_B ) \rho^{ \otimes N} ( \Pi_A \otimes \Pi_B )$ is an entangled state, up to normalization.

For projection distillability it is hence sufficient to project $N$ copies of $\rho$ to an entangled two-qubit state.

\item[qubit probabilistic-distillable] if there is an integer $N$ and an LOCC protocol $(\mathcal{E},\mathcal E')$ such that $\mathcal{E} ( \rho^{\otimes N} )$ is an entangled two-qubit state, up to normalization.

Qubit-probabilistic distillability is analogous to projection distillability, but uses LOCC protocols instead of projections.

\item[qubit-distillable] if there is an integer $N$ and an LOCC protocol $\Lambda$ such that $\Lambda ( \rho^{\otimes N} )$ is an entangled two-qubit state.

For qubit distillability, we require a deterministic LOCC protocol, rather than the probabilistic protocol from the previous definition.
\end{enumerate}

These definitions capture an intuitive notion of distillable entanglement and are motivated from formulations of distillability in literature.
In Ref.~[\onlinecite{Horodecki_1998}] the calculations are based on (probabilistic) qubit distillability and expressions resembling filter and projection distillability can also be found in this paper. Rate distillability is particularly intuitive and appears, for example, in Refs.~[\onlinecite{Rains1999}, \onlinecite{Leuchs2019}].
To our knowledge the definitions of single and probabilistic distillability have not been mentioned before. We introduce them in order to break up the proofs into elemental steps.

A logical classification of the seven definitions is presented in Figure~\ref{fig_def}. Noticeably, the upper left quadrant in this classification is vacant. The reason is that there is no natural candidate from literature or an emergent candidate from the proofs that would fit there. Nonetheless, it is possible to contrive according definitions, see Appendix~\ref{sec_qbit_rate} for an example. Further definitions slightly outside of this scheme are discussed in Appendix~\ref{sec_more_definitions}, and \ref{App_convrate}, but these additional definitions are still equivalent to the seven definitions presented in this section. In Appendix~\ref{App_perfect} we also give an example of a naive, but ill-behaved notion of distillable entanglement.


\section{Recurrent methods}
\label{sec_methods}

In this section we introduce and review important methods that are frequently needed building blocks in the main theorems. We show how probabilistic LOCC protocols can be approximated by deterministic ones, mention an important bound on the fidelity of two-qubit states and discuss the twirling method.

\subsection{Iterative postselection procedure}
\label{post_selection}

\begin{lemma}\normalfont\label{lemm_postselection}
Consider states $\rho$ and $\rho'$ and an LOCC protocol $(\mathcal{E},\mathcal E')$ such that $\rho' = \frac{1}{p} \mathcal{E} (\rho)$ with success probability $p = \tr[ \mathcal{E}( \rho)] > 0$. Then there exists a sequence of LOCC protocols $(\Lambda_n)_n$ such that
\begin{equation}
  \lim_{n\to\infty} \norm{ \Lambda_n ( \rho^{\otimes n} ) - \rho' } =0.
\end{equation}
\end{lemma}
\begin{proof}
We consider $n$ copies of $\rho$ and apply the LOCC protocol $(\mathcal{E},\mathcal E')$ to the first copy. If the procedure is successful, we obtain $\rho' =\frac1p\mathcal E(\rho)$, discard the remaining copies of $\rho$, and terminate the protocol. If the procedure is not successful, this copy of $\rho$ is transformed to the unwanted state $\tau=\frac1{1-p}\mathcal E'(\rho)$ and we apply $\mathcal{E}$ to the next copy of $\rho$. After a maximum of $n$ unsuccessful applications of $\mathcal{E}$, we discard all copies of $\tau$ except one. This yields the sequence of $\LOCC$ protocols $( \Lambda_n)_n$ with
\begin{equation}
  \Lambda_n ( \rho^{\otimes n} ) = [ 1 - (1-p)^n] \rho' + (1-p)^n \tau.
\end{equation}
It follows that
\begin{equation}
  \lim_{n\to\infty}\norm{ \Lambda_n ( \rho^{\otimes n} ) - \rho' } =
  \lim_{n\to\infty}[(1-p)^n\norm{ \tau - \rho'}] = 0
\end{equation}
due to $1-p<1$.
\end{proof}


\subsection{Maximally entangled two-qubit states}
\label{app_maxent}

The maximal fidelity of a pure two-qubit product state $\ket\psi=\ket{\alpha}\ket{\beta}$ with any maximally entangled state can be directly verified to be
\begin{equation}
  \max_{\ket\alpha,\ket\beta}\abs{\braket{\alpha,\beta|\Phi^+}}^2 = \frac 12.
\end{equation}
Since the separable states are the convex hull of the pure product states \cite{werner_sep} and the fidelity $\braket{\psi|\rho|\psi}$ is linear in $\rho$, it follows that the maximal fidelity of a separable two-qubit state with a maximally entangled state is $\frac12$. Conversely, the fidelity of an entangled two-qubit state $\rho$ with some maximally entangled state is larger than $\frac 12$. This fact was first proved in Ref.~[\onlinecite{Horodecki1997ReductionCriterion}].


\subsection{Twirling}
\label{sec_Twirling}

Any bipartite two-qubit state $\rho$ with $F=\braket{\Phi^+|\rho|\Phi^+}$ can be mapped via LOCC to the Werner-type state
\begin{equation}\label{eq_Werner}
  W_F = F \proj{\Phi^{+}} + \frac{1-F}3( \proj{\Psi^{+}} + \proj{\Phi^{-}} + \proj{\Psi^{-}} ),
\end{equation}
where $\ket{\Phi^+}$, $\ket{\Phi^-}$, $\ket{\Psi^+}$, and $\ket{\Psi^-}$ denote the Bell-states as usual.
This can be achieved by Alice and Bob performing the same random unitary \cite{werner_sep} and a subsequent correction,
\begin{align}
  W_F = (\sigma_x\otimes\sigma_z)\left(\int (U\otimes U) \rho (U\otimes U)^\dag \mathrm{d}U\right) (\sigma_x\otimes\sigma_z)^\dag,
\end{align}
where integration is with respect to the Haar measure of $\mathrm{SU}(2)$ and $\sigma_x$ ($\sigma_z$) is the $x$ ($z$) Pauli matrix. However, using this method, Alice and Bob need to choose a random unitary, which is from an infinite set and hence in principle requires an unlimited amount of shared randomness or communication.

This latter subtlety can be avoided by using only a finite set of unitaries \cite{hashinglater}. These unitaries can be understood as a three-step process. In the first step Alice and Bob perform randomly the same Pauli matrix $\openone$, $\sigma_x$, $\sigma_y$, or $\sigma_z$, yielding already a state in Bell-diagonal form. In the next step, a random choice of the matrices $\eta_0=\openone$, $\eta_1=SH$, and $\eta_2=\sigma_3HS$ is applied, where $H$ is the Hadamard gate and $S$ the phase gate. Finally Alice and Bob apply the same correction as above. In summary,
\begin{align}
  W_F=\frac1{12}\sum_{\nu,\mu}(\sigma_x\otimes\sigma_z)(\eta_\nu\otimes \eta_\nu)
  (\sigma_\mu\otimes \sigma_\mu)\rho (\sigma_\mu\otimes \sigma_\mu)^\dag
  (\eta_\nu\otimes \eta_\nu)^\dag(\sigma_x\otimes\sigma_z)^\dag.
\end{align}


\section{Implications with straightforward proofs}
\label{sec_straightforward}

Given the definitions, it is plausible that rate distillability is the strongest and can easily be loosened to probabilistic distillability with a stopover at single distillability. We first give a pedestrian's proof that rate-distillable states are also single-distillable.
\begin{theorem}\normalfont
\label{theo_rate_single}
Rate distillability implies single distillability.
\end{theorem}
\begin{proof}
Let $\rho$ be rate-distillable and $\epsilon > 0$. Consequently, there are integers $N\ge M \ge 1$ and an LOCC protocol $\Lambda$ such that $\norm{\sigma_+^{\otimes M} - \Lambda(\rho^{\otimes N})}_1 < \epsilon $. By tracing out $M -1$ subsystems $X$ and using that the partial trace cannot increase the trace distance \cite{nielsen_chuang_2010}, we obtain
\begin{equation}
  \norm{ \tr_X ( \sigma_+^{\otimes M } ) - \tr_X [ \Lambda (\rho^{\otimes N} ) ] }_1 \leq \norm{ \sigma_+^{\otimes M} - \Lambda(\rho^{\otimes N}) }_1 \leq \epsilon.
\end{equation}
Clearly, the map $\tr_X$ can be trivially performed by LOCC. This means that we found an integer $N$ and an LOCC protocol $\tilde{\Lambda} =\tr_X\circ\Lambda$ such that $\norm{ \sigma_+ - \tilde{\Lambda}( \rho^{\otimes N})}_1 \leq \epsilon$ and the assertion follows.
\end{proof}

Clearly, if a state is single-distillable then it is also probabilistic-distillable because the former is a special case of the latter. The converse is also true.
\begin{theorem}\normalfont
Single distillability is equivalent to probabilistic distillability.
\end{theorem}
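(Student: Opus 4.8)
The plan is to establish only the nontrivial direction, that probabilistic distillability implies single distillability; the converse was already noted, since a single-distillation protocol is nothing but a probabilistic protocol that succeeds with certainty ($p=1$). So I would start by assuming $\rho$ is probabilistic-distillable and fixing $\epsilon>0$. By definition there is an integer $N$ and an LOCC instrument $(\mathcal{E},\mathcal{E}')$ with success probability $p=\tr[\mathcal{E}(\rho^{\otimes N})]>0$ such that the conditional success state $\tilde\rho=\tfrac1p\mathcal{E}(\rho^{\otimes N})$ obeys $\norm{\sigma_+-\tilde\rho}_1<\epsilon$.

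The core idea is \emph{amplification by repetition}: since each run of the probabilistic protocol succeeds with a fixed probability $p>0$, running it on enough independent blocks makes at least one success almost certain. Concretely, I would construct a deterministic LOCC protocol $\Lambda$ on $\rho^{\otimes KN}$ as follows. Partition the $KN$ copies into $K$ blocks of $N$ copies and apply $(\mathcal{E},\mathcal{E}')$ to each block. Alice and Bob exchange the classical success/failure flag of every block, keep the output of the first successful block and discard the rest, and, should all blocks fail, locally prepare a fixed two-qubit state $\tau=\proj{00}$. Every ingredient here — local instruments on disjoint blocks, broadcasting finitely many classical flags, conditional discarding, and local state preparation — is manifestly LOCC, so $\Lambda$ is a legitimate single-distillation protocol.

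It then remains to bound the output. By independence of the blocks, conditioned on block $i$ being the first success its retained output is exactly $\tilde\rho$, hence conditioned on \emph{any} success the output is $\tilde\rho$, while the all-failure event has probability $(1-p)^K$. Therefore
\begin{equation}
  \Lambda(\rho^{\otimes KN}) = \bigl[1-(1-p)^K\bigr]\,\tilde\rho + (1-p)^K\,\tau.
\end{equation}
Writing $\sigma_+$ as the same convex combination of two copies of itself and applying the triangle inequality gives
\begin{equation}
  \norm{\sigma_+-\Lambda(\rho^{\otimes KN})}_1 \leq \bigl[1-(1-p)^K\bigr]\norm{\sigma_+-\tilde\rho}_1 + (1-p)^K\norm{\sigma_+-\tau}_1 < \epsilon + 2(1-p)^K,
\end{equation}
using the crude bound $\norm{\sigma_+-\tau}_1\leq 2$. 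Since $p>0$, choosing $K$ large enough forces $(1-p)^K<\epsilon/2$, so the distance drops below $2\epsilon$; as $\epsilon$ was arbitrary, the relevant infimum vanishes and $\rho$ is single-distillable.

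The only genuinely delicate point I expect is the bookkeeping that the state retained on success equals $\tilde\rho$ exactly: this relies on processing the blocks independently and on the ``first success'' selection depending only on the classical flags, so that conditioning on a block's success does not disturb its reduced output. Everything else is routine, the vanishing failure contribution being controlled by the trivial bound $\norm{\sigma_+-\tau}_1\leq 2$.
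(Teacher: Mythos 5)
Your proof is correct and takes essentially the same route as the paper: the paper's nontrivial direction invokes its iterative postselection procedure (Theorem~\ref{theo_postselection}), which is precisely your amplification-by-repetition construction---run the probabilistic protocol block by block, keep the first success, and let the all-failure branch contribute only a term of weight $(1-p)^K$. The only cosmetic difference is that you locally prepare $\proj{00}$ upon total failure whereas the paper retains the failure output state $\tau=\tfrac1{1-p}\mathcal E'(\rho^{\otimes N})$; both are LOCC and both error terms vanish as $K\to\infty$.
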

\begin{proof}
It remains to show the that probabilistic distillability implies single distillability. Let $\rho$ be probabilistic-distillable. Therefore, for every $\epsilon > 0$ we find an integer $N$ and an LOCC protocol $(\mathcal{E},\mathcal E')$ such that
\begin{equation}
  \norm{ \frac{1}{p} \mathcal{E} ( \rho^{\otimes N} ) - \sigma_+ }_1 \leq \frac\epsilon2
\end{equation}
with the success probability $p = \tr [ \mathcal{E}( \rho^{\otimes N} ) ]>0$. Using the strategy for postselection explained in Section~\ref{post_selection}, according to Lemma~\ref{lemm_postselection} we find an integer $n$ and a deterministic LOCC protocol $\Lambda$ such that
\begin{equation}
  \norm{ \Lambda( \rho^{\otimes nN} ) - \frac{1}{p} \mathcal{E} ( \rho^{\otimes N} ) }_1 \leq \frac\epsilon2.
\end{equation}
Using the triangle inequality completes the proof as $\norm{\Lambda( \rho^{\otimes nN} ) - \sigma_+}_1 \leq \epsilon$.
\end{proof}

Filter-distillable is different from the previous definitions in that it does not directly involve LOCC operations but rather uses the local filter $A\otimes B$. Still, any probabilistic-distillable state is already filter distillable.
\begin{theorem}\normalfont
Probabilistic distillability implies filter distillability.
\end{theorem}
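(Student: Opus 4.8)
The plan is to exploit the structural fact that every LOCC operation is a separable operation, so that the conditional success map $\mathcal{E}$ admits a Kraus decomposition with \emph{product} operators, and then to single out one favourable branch by an averaging argument before rescaling it into a filter. Concretely, since $\mathcal{E}$ is the success branch of a finite LOCC protocol, it is separable and may be written as
\begin{equation}
  \mathcal{E}(\rho^{\otimes N}) = \sum_t (A_t \otimes B_t)\, \rho^{\otimes N}\, (A_t \otimes B_t)^\dagger,
\end{equation}
where each product $A_t\otimes B_t$ collects all of Alice's and all of Bob's local operators along one measurement record $t$. This is the single place where I use that the protocol is local rather than global.

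Next I would read the normalized output as a convex combination of branch states. Setting $p_t = \tr[(A_t\otimes B_t)\rho^{\otimes N}(A_t\otimes B_t)^\dagger]$ and, for $p_t>0$, $\tau_t = \frac1{p_t}(A_t\otimes B_t)\rho^{\otimes N}(A_t\otimes B_t)^\dagger$, one has $\frac1p\mathcal{E}(\rho^{\otimes N}) = \sum_t \frac{p_t}{p}\tau_t$ with $\sum_t \frac{p_t}{p}=1$, a genuine mixture. By probabilistic distillability we may arrange $\frac12\norm{\frac1p\mathcal{E}(\rho^{\otimes N})-\sigma_+}_1 \le \eta$ for an arbitrarily small $\eta>0$, so the left inequality in Eq.~\eqref{eq_fuchs_graaf} yields $\braket{\Phi^+|\frac1p\mathcal{E}(\rho^{\otimes N})|\Phi^+} \ge 1-\eta$. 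Because the fidelity to the fixed pure target is linear in the state, this value equals $\sum_t \frac{p_t}{p}\braket{\Phi^+|\tau_t|\Phi^+}$, an average of quantities in $[0,1]$; hence some branch $t^*$ with $p_{t^*}>0$ satisfies $\braket{\Phi^+|\tau_{t^*}|\Phi^+}\ge 1-\eta$.

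It then remains to turn this branch into a normalized filter and translate back to the trace distance. Rescaling one local operator, the product filter $A := A_{t^*}/\sqrt{p_{t^*}}$ and $B := B_{t^*}$ satisfies $(A\otimes B)\rho^{\otimes N}(A\otimes B)^\dagger = \tau_{t^*}$, which is admissible precisely because a local filter may be any local operator. The right inequality in Eq.~\eqref{eq_fuchs_graaf} gives $\norm{\sigma_+ - (A\otimes B)\rho^{\otimes N}(A\otimes B)^\dagger}_1 = \norm{\sigma_+-\tau_{t^*}}_1 \le 2\sqrt{\eta}$, which tends to zero as $\eta\to0$, so the infimum defining filter distillability vanishes. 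I expect the main obstacle to be the first step, namely justifying that the success branch of a finite LOCC protocol really decomposes into product Kraus terms, i.e.\ that LOCC maps are separable; everything afterwards is a short averaging-and-rescaling argument, whose only mild subtlety is to pick the witnessing branch among those with strictly positive weight so that $\tau_{t^*}$ is well defined.
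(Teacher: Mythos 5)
Your proof is correct and takes essentially the same route as the paper's: a product-form Kraus decomposition of the LOCC success branch, the first Fuchs--van de Graaf inequality to pass from trace distance to fidelity, an averaging argument to select one branch $\tau_{t^*}$ with fidelity at least the mean, and a rescaling of that branch's product operator into a local filter, followed by the second Fuchs--van de Graaf inequality. Your rescaling $A = A_{t^*}/\sqrt{p_{t^*}}$ is in fact the careful choice ensuring $(A\otimes B)\rho^{\otimes N}(A\otimes B)^\dagger = \tau_{t^*}$ exactly, whereas the paper's factor $1/q_{k_0}$ does not reproduce the normalized branch state and appears to be a typo.
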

\begin{proof}
Suppose $\rho$ is probabilistic-distillable. Then for $\epsilon > 0$ there is an integer $N$ and an LOCC protocol $(\mathcal{E},\mathcal E')$ such that
\begin{equation}
  \norm{ \sigma_+ - \frac{1}{p} \mathcal{E} ( \rho^{\otimes N} )}_1 \leq \frac{\epsilon^2}2,
\end{equation}
where $p = \tr[ \mathcal{E} ( \rho^{\otimes N} )] >0 $. Applying the first Fuchs--van de Graaf inequality in Eq.~\eqref{eq_fuchs_graaf} yields
\begin{equation}\label{eq_FDW_c}
  1 - \frac{1}{p} \tr [ \sigma_+ \mathcal{E} (\rho^{\otimes N})] \leq \frac{\epsilon^2}{4}.
\end{equation}
Now we consider the Kraus decomposition \cite{nielsen_chuang_2010} of the operation $\mathcal{E}$,
\begin{equation}
  \mathcal{E}( \rho^{\otimes N} ) = \sum_k K_k \, \rho^{\otimes N} K_k^\dagger = \sum_k p_k \tau_k.
\end{equation}
Note that the operators $K_k$ can be chosen to be of product form, due to $(\mathcal{E}, \mathcal{E}') \in \LOCC$. Using the linearity of the trace in Eq.~\eqref{eq_FDW_c} we obtain
\begin{equation}
  \sum_k q_k \tr ( \sigma_+ \tau_k) \geq 1- \frac{\epsilon^2}{4},
\end{equation}
where $q_k={p_k}/{p}$. From $\sum q_k=1$ we see that for $k_0=\argmax_k \tr(\sigma_+ \tau_k)$ we have $\tr ( \sigma_+ \tau_{k_0}) \geq 1- \frac{\epsilon^2}{4}$. The second Fuchs--van de Graaf inequality in Eq.~\eqref{eq_fuchs_graaf} gives us now
\begin{equation}
  \norm{\sigma_+ - \tau_{k_0}}_1 \leq 2 \sqrt{1-\left(1-\frac{\epsilon^2}{4}\right)} = \epsilon.
\end{equation}
Finally, we can define $A$, $B$ such that
\begin{equation}
  A \otimes B = \frac{1}{q_{k_0}} K_{k_0}
\end{equation}
which is possible because $K_{k_0}$ is of product form. Therefore, for every $\epsilon >0$ we find local filters $A$ and $B$ and some $N$ such that
\begin{equation}
  \norm{ \sigma_+ - (A \otimes B) \rho^{\otimes N} (A \otimes B)^\dagger }_1 \le \epsilon.
\end{equation}
\end{proof}

We now make the transition from the definitions where we approximate the entangled state $\sigma_+$ to definitions where we transform to any unspecified entangled two-qubit state.
\begin{theorem}\normalfont
Filter distillability implies projection distillability.
\end{theorem}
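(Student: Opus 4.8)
The plan is to turn the local filters $A,B$ into local projectors by projecting onto their supports, and then to argue that this projection cannot undo the entanglement that the filters create. Recall that for the trace distance $\norm{\sigma_+ - (A\otimes B)\rho^{\otimes N}(A\otimes B)^\dagger}_1$ to be meaningful, each filter reduces the corresponding $N$-copy space to a qubit, $A,B\colon(\mathbb{C}^d)^{\otimes N}\to\mathbb{C}^2$. First I would fix $\epsilon>0$ small enough that every state within trace distance $\epsilon$ of $\sigma_+$ is entangled; this is possible because the separable states form a closed set and the entangled state $\sigma_+$ lies at a strictly positive distance from it. Filter distillability then supplies an integer $N$ and filters $A,B$ for which the normalized filtered state $\tau$ obeys $\norm{\sigma_+-\tau}_1\le\epsilon$, so that $\tau$---and hence the unnormalized filtered state, entanglement being invariant under positive scaling---is entangled.

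Next I would note that both filters must have rank exactly two. Were $A$ of rank one, Alice's image would be one-dimensional, forcing the filtered state into a product form across the bipartition and thus making it separable, contrary to the previous step. Hence the supports $S_A=(\ker A)^\perp$ and $S_B=(\ker B)^\perp$ are genuine two-dimensional subspaces, and I define $\Pi_A,\Pi_B$ to be the orthogonal projectors onto them. Writing $A=A_0\Pi_A$ and $B=B_0\Pi_B$, where $A_0,B_0$ denote the restrictions of the filters to their supports, the filtered state factorizes as
\[
  (A\otimes B)\rho^{\otimes N}(A\otimes B)^\dagger=(A_0\otimes B_0)\,\omega\,(A_0\otimes B_0)^\dagger,\qquad \omega=(\Pi_A\otimes\Pi_B)\rho^{\otimes N}(\Pi_A\otimes\Pi_B),
\]
with $\omega$ supported on the $2\times2$ subspace $S_A\otimes S_B$.

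To conclude I would use that conjugation by a local operator sends separable states to separable states: if $\omega$ were separable, say $\omega=\sum_k q_k\proj{a_k}\otimes\proj{b_k}$, then $(A_0\otimes B_0)\,\omega\,(A_0\otimes B_0)^\dagger=\sum_k q_k\proj{A_0 a_k}\otimes\proj{B_0 b_k}$ would be separable as well, contradicting the entanglement of the filtered state. Therefore $\omega=(\Pi_A\otimes\Pi_B)\rho^{\otimes N}(\Pi_A\otimes\Pi_B)$ is entangled, which is exactly projection distillability for the integer $N$ and the two-dimensional projectors $\Pi_A,\Pi_B$. The step needing the most care is the rank-two argument, since the definition of projection distillability explicitly demands projectors onto two-dimensional subspaces; once that is secured, the preservation of separability is immediate and, worth noting, does not even require invertibility of $A_0,B_0$.
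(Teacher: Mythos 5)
Your proof is correct and takes essentially the same route as the paper's: fix $\epsilon$ so that the $\epsilon$-neighborhood of $\sigma_+$ contains only entangled states, project onto the supports of the filters so that $(A\otimes B)\rho^{\otimes N}(A\otimes B)^\dagger$ factors as a local conjugation of $(\Pi_A\otimes\Pi_B)\rho^{\otimes N}(\Pi_A\otimes\Pi_B)$, and conclude that the projected state must be entangled since local operators cannot create entanglement from a separable state. Your explicit rank-two argument is a small extra step the paper glosses over, and it is a worthwhile one, since the definition of projection distillability demands projectors onto genuinely two-dimensional subspaces.
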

\begin{proof}
The proof consists mainly of an argument given in Ref.~[\onlinecite{Horodecki_1998}]. For completeness we discuss the steps in detail. First note that the set of separable states is closed \cite{Horodecki2009QuantumEntanglement} and hence, the set of entangled states is open in the set of all states. Let now $\rho$ be filter-distillable. Then there is a trace norm $\epsilon$-neighborhood around $\sigma_+$ that contains only entangled states for sufficiently small $\epsilon >0$. Now we use the filter distillability of $\rho$ to find an integer $N$ and operators $A$, $B$ such that
\begin{equation}
  \norm{ ( A \otimes B ) \rho^{\otimes N} ( A \otimes B )^\dagger - \sigma_+ }_1 < \epsilon.
\end{equation}
Consequently, $(A \otimes B) \rho^{\otimes N} (A \otimes B)^\dagger$ is entangled because it lies in the entangled $\epsilon$-neighborhood of $\sigma_+$. Since $A$ and $B$ map onto qubit spaces, we can express them as
\begin{equation}
  A = \ketbra{0}{\psi_A} + \ketbra{1}{\phi_A} \text{ and }
  B = \ketbra{0}{\psi_B} + \ketbra{1}{\phi_B},
\end{equation}
with not necessarily normalized vectors $\ket{\psi_{A/B}}$ and $\ket{\phi_{A/B}}$. We now define the projector $\Pi_A$ which projects onto the subspace spanned by $ \ket{\psi_{A}}$, $\ket{\phi_{A}}$. Then
\begin{equation}
  (A \otimes B) \rho^{\otimes N} (A \otimes B)^\dagger=
  ( A \otimes B ) ( \Pi_A \otimes \Pi_B ) \rho^{\otimes N} ( \Pi_A \otimes \Pi_B ) ( A \otimes B )^\dagger.
\end{equation}
The separable operator $(A \otimes B)$ cannot entangle a separable state. Thus, the projected state $( \Pi_A \otimes \Pi_B ) \rho^{\otimes N} ( \Pi_A \otimes \Pi_B )$ is already entangled, up to normalization.
\end{proof}

One can always extend the projectors from projection distillability to a probabilistic LOCC protocol, yielding qubit-probabilistic distillability.
\begin{theorem}\normalfont
Projection distillability implies qubit-probabilistic distillability.
\end{theorem}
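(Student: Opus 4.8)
The plan is to realize the two local projections as the success branch of a single round of local measurements, which is manifestly an LOCC protocol. First I would invoke projection distillability to obtain an integer $N$ and projectors $\Pi_A$, $\Pi_B$ onto two-dimensional subspaces for which the unnormalized state $(\Pi_A \otimes \Pi_B)\rho^{\otimes N}(\Pi_A \otimes \Pi_B)$ is entangled. The key observation is that Alice and Bob can each carry out the two-outcome projective measurement $\{\Pi_A, \mathbb{I}-\Pi_A\}$ and $\{\Pi_B, \mathbb{I}-\Pi_B\}$ on their respective subsystems; these are valid local instruments since $\Pi\,\tau\,\Pi + (\mathbb{I}-\Pi)\,\tau\,(\mathbb{I}-\Pi)$ is trace preserving.

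I would then declare the protocol a success precisely when both parties obtain their projector outcome, which they coordinate by each announcing a single classical bit. This defines a conditional operation $(\mathcal{E},\mathcal{E}') \in \LOCC$ with the success map $\mathcal{E}(\tau) = (\Pi_A \otimes \Pi_B)\,\tau\,(\Pi_A \otimes \Pi_B)$ and $\mathcal{E}'$ collecting the three remaining outcomes. The success probability is $p = \tr[\mathcal{E}(\rho^{\otimes N})] = \tr[(\Pi_A \otimes \Pi_B)\rho^{\otimes N}]$, which is strictly positive because the projected state is positive semidefinite and, being entangled, nonzero.

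It then remains to note that $\tfrac1p\mathcal{E}(\rho^{\otimes N})$ is supported on the range of $\Pi_A \otimes \Pi_B$, which is the tensor product of a two-dimensional space on each side and hence a two-qubit system, while being entangled by assumption. This exhibits $\rho$ as qubit-probabilistic distillable and completes the argument.

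I do not expect a substantial obstacle here, as the whole content is a translation between the projection formalism and the probabilistic-LOCC formalism. The single point deserving care is verifying that postselecting on the joint success outcome remains within LOCC, which it does: it is one round of local measurements together with a one-bit classical announcement from each party, exactly the operations admitted by the definition of $(\mathcal{E},\mathcal{E}') \in \LOCC$.
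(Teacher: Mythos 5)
Your proposal is correct and follows essentially the same route as the paper's proof: complete $\Pi_A$ and $\Pi_B$ to local two-outcome measurements, postselect on the joint success outcome, and observe that the resulting (normalized) state is an entangled two-qubit state occurring with probability $p=\tr[(\Pi_A\otimes\Pi_B)\rho^{\otimes N}]>0$. Your write-up merely spells out a few details the paper leaves implicit, such as the one-bit classical announcements and the argument that $p>0$ because an entangled operator is nonzero.
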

\begin{proof}
Suppose $\rho$ is projection-distillable and let $N$, $\Pi_A$, and $\Pi_B$ be from the definition. We complete the projections $\Pi_A$ and $\Pi_B$ to local measurements. With a probability of $p = \tr( \Pi_A \otimes \Pi_B \, \rho^{\otimes N }) > 0$ these measurements yield the projected state $(\Pi_A \otimes \Pi_B) \rho^{\otimes N} ( \Pi_A \otimes \Pi_B)$, which is, up to normalization, entangled by assumption.
\end{proof}

Next, one can upgrade any probabilistic LOCC protocol to an imperfect deterministic LOCC protocol, by applying the protocol from Section~\ref{post_selection}.
\begin{theorem}\normalfont
Qubit probabilistic distillability implies qubit distillability.
\end{theorem}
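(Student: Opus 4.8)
The plan is to promote the probabilistic protocol to a deterministic one by running it on many independent blocks and keeping the two-qubit output of the first block that succeeds, discarding all others. First I would invoke qubit probabilistic distillability to obtain an integer $N$ and an LOCC protocol $(\mathcal{E},\mathcal E')$ such that $\sigma=\frac1p\mathcal E(\rho^{\otimes N})$ is an entangled two-qubit state, where $p=\tr\mathcal E(\rho^{\otimes N})>0$ is the success probability.

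Next I would build a deterministic LOCC protocol $\Lambda$ acting on $n$ copies of $\rho^{\otimes N}$, that is, on $\rho^{\otimes nN}=(\rho^{\otimes N})^{\otimes n}$. On each of the $n$ blocks Alice and Bob run $(\mathcal E,\mathcal E')$ and then exchange the single classical bit recording success or failure of that block. If at least one block succeeds, they keep the two-qubit output of the first successful block, route it to a fixed output register by local unitaries conditioned on the exchanged classical data, and trace out every remaining block; if all blocks fail, they discard everything and locally prepare the fixed separable two-qubit state $\proj{00}$. Since this uses only finitely many rounds and a finite amount of classical communication, $\Lambda$ is a legitimate deterministic, trace-preserving LOCC map, in the spirit of the postselection construction of Appendix~\ref{post_selection}.

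Because the blocks are independent, conditioning on the first success being block $k$ leaves that block in the state $\sigma$ irrespective of the earlier failures, so the deterministic output is the mixture
\begin{equation}
  \Lambda(\rho^{\otimes nN}) = P_n\,\sigma + (1-P_n)\proj{00},
\end{equation}
where $P_n=1-(1-p)^n$ is the probability that at least one block succeeds. As $p>0$ we have $P_n\to1$, and therefore $\norm{\Lambda(\rho^{\otimes nN})-\sigma}_1\le 2(1-P_n)\to0$.

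Finally, the crucial step reuses the openness argument already employed for projection distillability: the entangled states form an open set, so $\sigma$ admits a trace-norm neighborhood consisting only of entangled states. Choosing $n$ large enough that $2(1-P_n)$ is smaller than the radius of that neighborhood makes $\Lambda(\rho^{\otimes nN})$ genuinely an entangled two-qubit state, which is precisely qubit distillability. I expect the main obstacle to be conceptual rather than computational: unlike the approximation-based definitions, qubit distillability demands an output that is \emph{exactly} entangled rather than merely close to $\sigma_+$, so one cannot simply absorb a small error. The resolution is that a small admixture of the fixed failure state $\proj{00}$ to the entangled $\sigma$ stays inside the open set of entangled states, which is exactly what the openness lets us guarantee.
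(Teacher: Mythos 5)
Your proposal is correct and follows essentially the same route as the paper: you promote the probabilistic protocol to a deterministic one by repetition over independent blocks (which is exactly the postselection construction of Theorem~\ref{theo_postselection}, with the cosmetic difference that you prepare $\proj{00}$ on failure rather than keeping a failure-branch output), and you then conclude via the openness of the set of entangled two-qubit states. No gaps; the conditional-state argument and the trace-norm bound $2(1-P_n)\to 0$ are exactly what is needed.
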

\begin{proof}
Given a qubit probabilistic-distillable state $\rho$ we have an integer $N$ and an $\LOCC$ protocol $( \mathcal{E}, \mathcal{E}')$ such that $\tau = \frac{1}{p} \mathcal{E} ( \rho^{\otimes N })$ is entangled with $p = \tr[ \mathcal{E} ( \rho^{\otimes N})] > 0$. Using the postselection strategy presented in Section~\ref{post_selection}, we find according to Lemma~\ref{lemm_postselection} an integer $n$ and a deterministic $\LOCC$ protocol $\Lambda$ for every $\epsilon > 0$ such that $\norm{ \Lambda ( \rho^{\otimes nN} ) - \tau }_1 \leq \epsilon$ holds. Since the entangled states are topologically open in the set of states \cite{Horodecki2009QuantumEntanglement}, we can map into an $\epsilon$-neighborhood of $\tau$ where every state is entangled.
\end{proof}


\section{Qubit implies probabilistic}
\label{sec_q_to_prob}

In this section we cross the vertical boundary in Figure~\ref{fig_def} from left to right, that is, we transition back from a notion of distillability which produces any qubit entangled state to a notion where a maximally entangled state is approximated. This transition requires the recurrence protocol.

The recurrence protocol is described in Refs.~[\onlinecite{Horodecki2009QuantumEntanglement},~\onlinecite{Recurrenceoriginal}]. Since the actual implementation is rather technical, we focus on the relevant outcome and refer the reader to the aforementioned sources. A single step in the protocol consumes two copies of the Werner-type state $W_F$, see Eq.~\eqref{eq_Werner}, and produces the state $W_{F'}$ with $F'>F$. Remember that $F$ is the fidelity of $W_F$ with respect to $\ket{\Phi^+}$.
\begin{lemma}\normalfont\label{lem_recc}
For every $\frac12 < F < 1$ there exists an LOCC protocol $(\mathcal E,\mathcal E')$ with $\frac1p\mathcal E(W_F^{\otimes 2})=W_{F'}$ and $p=\tr[\mathcal E(W_F^{\otimes 2}) ]>\frac5{18}$ such that
\begin{equation}
  1 > F' > g(F) = \frac{10 F^2 - 2F +1}{8 F^2 - 4F +5} > F.
\end{equation}
\end{lemma}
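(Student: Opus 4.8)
The plan is to realise the single recurrence step of Ref.~[\onlinecite{Recurrenceoriginal}] explicitly in the Bell basis and to read off the success probability and the output populations from the way a bilateral controlled-NOT permutes Bell-state labels. I label each Bell state by an amplitude bit and a phase bit, so that $W_F$ carries population $F$ on $\ket{\Phi^+}$ (label $(0,0)$) and population $e=(1-F)/3$ on each of the three remaining labels. In the step both parties apply the bilateral CNOT with the first pair as control and the second as target, measure the target pair in the computational basis, and keep the control pair exactly when the two target outcomes coincide. The gate sends the target amplitude bit to $x_1\oplus x_2$ and the control phase bit to $z_1\oplus z_2$, so coincidence selects the sector $x_1=x_2$.

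First I would compute the two quantities the lemma constrains. Summing the product populations over the coincidence sector gives the success probability $p=(8F^2-4F+5)/9$; since this is monotonically increasing on $(\tfrac12,1)$ it is bounded below by its endpoint value $\tfrac59$, hence $p>\tfrac5{18}$ with room to spare. Tracking the surviving control labels shows the kept state is Bell-diagonal with $\ket{\Phi^+}$ population $(F^2+e^2)/p$, phase-flip ($\ket{\Phi^-}$) population $2Fe/p$, and each bit-flip population $2e^2/p$. Its fidelity with $\sigma_+$ is thus the benchmark $g(F)$, and a final twirl (Appendix~\ref{sec_Twirling}) brings it to Werner form $W_{F'}$. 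The auxiliary bounds are then immediate: the factorisation $g(F)-F=(1-F)(2F-1)(4F-1)/(8F^2-4F+5)$ gives $g(F)>F$ on the open interval, while $g(1)=1$ together with monotonicity forces $F'<1$, consistent with the impossibility of exact distillation in Appendix~\ref{App_perfect}.

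The \emph{strict} inequality $F'>g(F)$ is the crux, and it is exactly here that the step must do more than bare coincidence selection. A plain re-twirl of the coincidence output only reproduces $W_{g(F)}$; moreover, optimising a local product filter on the single coincidence output drives the $\sigma_+$-fidelity back to precisely $g(F)$ at the trivial filter, so no single-copy post-processing improves it. The surplus I would exploit is the asymmetry just exposed: because $F>e$ throughout, the output carries a strict phase-flip excess $2Fe/p>2e^2/p$, and the probability margin $p>\tfrac5{18}$—a full factor of two below the coincidence value $\tfrac59$—is exactly the budget one expects to spend in converting this excess, by a further phase-sector manipulation, into additional $\ket{\Phi^+}$ weight before the final twirl. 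I expect the main obstacle to be verifying that this conversion yields a strict gain rather than merely reshuffling populations: one must show that the phase-flip surplus strictly raises the post-twirl $\sigma_+$-fidelity above the value $g(F)$ produced by the plain step. This quantitative estimate is the heart of the recurrence bound of Ref.~[\onlinecite{Recurrenceoriginal}], and it is the step on which the whole lemma turns.
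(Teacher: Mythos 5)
Your construction is essentially the paper's own proof, and your numbers are correct. The paper implements the same recurrence step with the single local Kraus operator $K=\ketbra{0}{00}+\ketbra{1}{11}$ on each side, i.e., it keeps only the coincidence outcome $(0,0)$ rather than both, giving $p=\tfrac1{18}(8F^2-4F+5)>\tfrac5{18}$ --- exactly half of your two-outcome value $\tfrac19(8F^2-4F+5)$, a discrepancy the paper itself flags in the remark after the lemma; either choice satisfies the stated bound. Your output populations $(F^2+e^2)/p$, $2Fe/p$, $2e^2/p$, $2e^2/p$, the concluding twirl, and the factorization $g(F)-F=(1-F)(2F-1)(4F-1)/(8F^2-4F+5)$ are all correct and match the paper's computation.

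The genuine problem is your final paragraph, which declares the strict inequality $F'>g(F)$ to be the crux and leaves it unproven. No such step exists in the paper, and none is needed: the paper's proof computes $F'=\tfrac1p\cdot\tfrac1{18}(10F^2-2F+1)=g(F)$ \emph{exactly}, so the ``$>$'' between $F'$ and $g(F)$ in the lemma's display is an inaccuracy of the statement (read it as equality, or as ``$\ge$''); correspondingly, Theorem~\ref{theo_recurrence_iterative} iterates precisely $F_{n+1}=g(F_n)$ and uses nothing stronger. Moreover, the repair you sketch cannot work, and your own text already contains the refutation. The factor of two between $\tfrac5{18}$ and $\tfrac59$ is not a ``budget'' convertible into fidelity: conditioned on coinciding target outcomes, the post-measurement source states for the outcomes $(0,0)$ and $(1,1)$ are identical, so discarding one outcome halves $p$ without changing the conditional state at all. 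And as you note yourself, product filtering of the single Bell-diagonal output pair is optimized by the trivial filter, while the final twirl preserves the $\Phi^+$ fidelity, so no ``phase-sector manipulation'' of the surplus $2Fe/p>2e^2/p$ can push the fidelity strictly above $g(F)$ within this protocol. Delete the last paragraph, assert $F'=g(F)$ together with $1>g(F)>F$ on $\tfrac12<F<1$ (the upper bound follows from $1-g(F)=2(2-F)(1-F)/(8F^2-4F+5)>0$), and your proof is complete and coincides with the paper's.
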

\begin{proof}
Alice and Bob perform on their respective qubits an operation with the single Kraus operator $K=\ketbra{0}{00}+\ketbra{1}{11}$, that is, $\mathcal E(\rho)= (K\otimes K)\rho(K\otimes K)^\dag$. One verifies that the resulting state is again Bell-diagonal with
\begin{align}
  F'&=\braket{\Phi^+|\frac1p\mathcal{E} (W_F^{\otimes 2} )|\Phi^+}=\frac1p \frac1{18}(10F^2-2F+1)> F \quad\text{and}\\
  p&=\tr [ \mathcal{E} ( W_F^{\otimes 2} ) ]=\frac1{18}(8F^2-4F+5)>\frac{5}{18}.
\end{align}
Applying twirling, see Section~\ref{sec_Twirling}, yields then the state $W_{F'}$.
\end{proof}

We mention that in the original recurrence protocol \cite{Recurrenceoriginal}, $p$ is larger by a factor of two, because in the first step one can apply an operation with two successful outcomes.

Iterating the recurrence protocol we can achieve now any target fidelity $F_\text{target}<1$.

\begin{lemma}\normalfont\label{lemm_recurrence_iterative}
For every two-qubit state $\rho$ with $F=\braket{\Phi^+|\rho|\Phi^+}>\frac12$ and any target fidelity $F_\text{target}<1$ there is an integer $N$ and an LOCC protocol $(\mathcal{E},\mathcal E')$ such that $\frac{1}{p} \mathcal{E} ( \rho^{\otimes 2N} ) = W_{F'}$, where $F' \ge F_\text{target} $ and $p = \tr[ \mathcal{E} ( \rho^{\otimes 2N}) ] > 0$.
\end{lemma}
\begin{proof}
First we apply twirling to the state to achieve the Bell-diagonal state $W_F$. Then, iterating the recurrence protocol, we can achieve any fidelity $F_\text{target}<1$ in a finite number of steps: Consider the sequence of iterated fidelities $F_{n+1} = g(F_n)$ with $g$ given according to Lemma~\ref{lem_recc}. This sequence converges to unity, since it is strictly monotonously increasing with $g(1)=1$. Therefore, $1>F_N\ge F_\mathrm{target}$ can be reached for a finite number of steps $N$. The probability to succeed in the $N$ iterations is lower bounded by $(5/18)^N$ and hence the total probability of success is also finite.
\end{proof}

We are now in the position to prove the main result the of this section:

\begin{theorem}\normalfont
Qubit distillability implies probabilistic distillability.
\end{theorem}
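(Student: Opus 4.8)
The plan is to manufacture many copies of an entangled two-qubit state from $\rho$ and then drive them towards $\sigma_+$ with the recurrence protocol of Appendix~\ref{sec_Recurrence}. Qubit distillability hands us an integer $N$ and an LOCC protocol $\Lambda$ such that $\tau=\Lambda(\rho^{\otimes N})$ is an entangled two-qubit state. The first observation is that copies of $\tau$ are cheap: running $\Lambda$ independently on $m$ disjoint blocks of $N$ copies each turns $\rho^{\otimes mN}$ into $\tau^{\otimes m}$, so an arbitrary number of fresh copies of $\tau$ is available at the cost of consuming more copies of $\rho$. Since the target notion, probabilistic distillability, permits a vanishing success probability, I do not need the output of the purification step to be deterministic.

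The core step is to feed $\tau^{\otimes m}$ into the recurrence protocol, which as used here takes copies of \emph{any} entangled two-qubit state and returns, conditioned on a sequence of agreeing measurement outcomes, a two-qubit state whose fidelity with a maximally entangled state can be pushed arbitrarily close to one. The one genuinely nontrivial ingredient concealed in this sentence, and the step I expect to be the main obstacle, is that an entangled two-qubit state need not have singlet fraction above $1/2$ with respect to any Bell state; the recurrence iterations can only make progress once a local filter has raised the singlet fraction past $1/2$. That such a filter always exists for an entangled two-qubit state is exactly where the special structure of two-qubit entanglement enters, namely the distillability of every entangled (equivalently, negative partial transpose) state in dimension $2\times2$. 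This filter is itself a probabilistic local operation, which is another reason the argument lands in the probabilistic rather than the deterministic column.

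Granting the recurrence protocol, I would fix $\epsilon>0$ and choose $m$ large enough that the protocol applied to $\tau^{\otimes m}$ succeeds with probability $p>0$ and produces a normalized output within trace distance $\epsilon$ of $\sigma_+$, where I identify the target Bell state with $\ket{\Phi^+}$ at no cost by Appendix~\ref{app_maxent}. Both the filter and the conditional recurrence measurements are conditional LOCC operations, so their composition with $\Lambda^{\otimes m}$ is again a conditional LOCC protocol $(\mathcal E,\mathcal E')$ acting on $\rho^{\otimes mN}$ with $\tr[\mathcal E(\rho^{\otimes mN})]=p>0$ and $\norm{\sigma_+-\tfrac1p\mathcal E(\rho^{\otimes mN})}_1\le\epsilon$. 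As $\epsilon$ was arbitrary, this is precisely probabilistic distillability. The remaining work is bookkeeping: verifying that the number of recurrence rounds, and hence the block count $m$, needed to reach a given $\epsilon$ is finite and that the accumulated success probability stays strictly positive, both of which follow from the convergence of the recurrence iteration established in Appendix~\ref{sec_Recurrence}.
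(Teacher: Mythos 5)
Your proposal is correct, and it follows the same overall architecture as the paper's proof: run the qubit-distillation map $\Lambda$ block-wise to obtain copies of an entangled two-qubit state $\tau$, bring the fidelity with a maximally entangled state above $\tfrac12$, twirl to Bell-diagonal form, iterate the recurrence protocol (Theorem~\ref{theo_recurrence_iterative}), and convert fidelity to trace distance; the composition of all conditional steps is an LOCC protocol $(\mathcal E,\mathcal E')$ with positive success probability, which is exactly probabilistic distillability.

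The genuine difference is your local filtering step, and on this point your version is sound while the paper's shortcut is not. The paper dispenses with the filter by asserting (Appendix~\ref{app_maxent}) that every entangled two-qubit state already has fidelity larger than $\tfrac12$ with \emph{some} maximally entangled state. That assertion is false. Consider
\begin{equation}
  \rho_p = p\proj{\Psi^-}+(1-p)\proj{00},
  \qquad
  \ket{\Psi^-}=\tfrac1{\sqrt2}\bigl(\ket{01}-\ket{10}\bigr).
\end{equation}
For every $0<p\le 1$ the partial transpose of $\rho_p$ has a negative eigenvalue, since its $\{\ket{00},\ket{11}\}$ block is $\bigl(\begin{smallmatrix}1-p&-p/2\\-p/2&0\end{smallmatrix}\bigr)$, so $\rho_p$ is entangled. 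Yet for any maximally entangled $\ket\Phi=(W\otimes\openone)\ket{\Phi^+}$ one has $\abs{\braket{\Phi|00}}=\abs{\braket{\Phi|11}}$ (the diagonal entries of a $2\times2$ unitary have equal modulus), so orthonormality of $\ket{00},\ket{11},\ket{\Psi^-}$ gives $\abs{\braket{\Phi|\Psi^-}}^2\le 1-2\abs{\braket{\Phi|00}}^2$, and therefore
\begin{equation}
  \braket{\Phi|\rho_p|\Phi}\le p+(1-3p)\abs{\braket{\Phi|00}}^2\le\tfrac{1-p}2<\tfrac12
  \quad\text{for } p<\tfrac13,
\end{equation}
using $\abs{\braket{\Phi|00}}^2\le\tfrac12$. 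Twirling such a state produces a Werner-type state $W_F$ with $F\le\tfrac12$, which is separable, so the recurrence iteration of Lemma~\ref{lem_recc} cannot even begin. Your filter---raising the singlet fraction above $\tfrac12$ before twirling---is therefore necessary, not optional, and it is admissible here precisely because the target notion is probabilistic, as you observe. My only criticism concerns your justification of the filter: deriving its existence from ``the distillability of every entangled two-qubit state'' is circular in spirit, since that theorem is itself proved by exhibiting such a filter. The clean statement to invoke (and the actual content of the reference the paper cites) is the filtering lemma itself: for every entangled two-qubit state $\tau$ there exist local operators $A,B$ such that $(A\otimes B)\tau(A\otimes B)^\dagger$, normalized, has fidelity above $\tfrac12$ with some maximally entangled state.
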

\begin{proof}
Let $\rho$ be qubit-distillable and $\epsilon > 0$. Consequently, we can map $N$ instances of $\rho$ to an entangled two-qubit state $\tau$ by means of $\LOCC$. In Ref.~[\onlinecite{Horodecki1997ReductionCriterion}] it is shown that the fidelity of an entangled two-qubit state with some maximally entangled state $\sigma'$ is always larger than $\frac12$, see also Section~\ref{app_maxent}. Since we can always change the local bases such that $\sigma'=\sigma_+$ \cite{nielsen_chuang_2010}, we can assume, without loss of generality, that $\sigma'=\sigma_+$. After twirling $\tau$ into Bell-diagonal form, see Section~\ref{sec_Twirling}, we use the recurrence protocol \cite{Recurrenceoriginal}. According to Lemma~\ref{lemm_recurrence_iterative}, for any $\epsilon>0$ we find an integer $M$ and an LOCC protocol $(\mathcal{E}, \mathcal E')$ such that
\begin{equation}
  \tr(\sigma_+ \tfrac{1}{p} \mathcal{E} ( \tau^{\otimes M} ) ] \geq 1 - \frac{\epsilon^2}{4}.
\end{equation}
with success probability $p=\tr[\mathcal E(\tau^{\otimes M})]>0$. The second Fuchs--van de Graaf inequality in Eq.~\eqref{eq_fuchs_graaf} yields $\norm{\frac{1}{p} \mathcal{E} ( \tau^{\otimes M}) - \sigma_+}_1 \leq \epsilon$ which completes the proof.
\end{proof}

\section{Single implies rate}
\label{sec_hashing}

It remains to cross the horizontal line in Figure~\ref{fig_def} in order to transition from single distillability to rate distillability. The crucial step hereby is the hashing protocol, but in order to apply the protocol we first need a Bell-diagonal state with low von Neumann entropy $S(\rho)=-\tr[\rho \log_2(\rho)]$.
\begin{lemma}\normalfont\label{lem_single_hashable}
If $\rho$ is single-distillable, then for any $0<h<S(\rho)$ one can produce a Bell-diagonal state $\rho'$ with $S(\rho')<h$ using LOCC and a finite number of copies of $\rho$.
\end{lemma}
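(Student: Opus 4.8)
The plan is to start from single-distillability and show that, by investing enough copies of $\rho$, one can produce a Bell-diagonal state whose entropy is below any prescribed threshold $h$. Single-distillability gives us, for any $\epsilon>0$, an integer $N$ and an LOCC protocol $\Lambda$ with $\norm{\sigma_+-\Lambda(\rho^{\otimes N})}_1\le\epsilon$. The key observation is that $\sigma_+$ is pure and hence has zero entropy, so a state close to $\sigma_+$ in trace distance must have small entropy. I would make this quantitative using a continuity bound for the von Neumann entropy, such as the Fannes--Audenaert inequality, which bounds $\abs{S(\omega)-S(\sigma_+)}$ in terms of $\norm{\omega-\sigma_+}_1$ and the dimension of the supporting space. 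Since both $\sigma_+$ and the output $\omega=\Lambda(\rho^{\otimes N})$ live on a fixed two-qubit space, the dimension is just $4$ and the bound is uniform. Thus by choosing $\epsilon$ small enough we can force $S(\omega)$ to be as small as we like, in particular below $h$.

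The remaining gap is that $\omega=\Lambda(\rho^{\otimes N})$ need not be Bell-diagonal, whereas the lemma demands a Bell-diagonal output $\rho'$. Here I would invoke the twirling procedure referenced in Appendix~\ref{sec_Twirling}: twirling is an LOCC operation that projects any two-qubit state onto its Bell-diagonal part while preserving the diagonal entries in the Bell basis, in particular preserving the fidelity $\tr(\sigma_+\omega)$ with the target. Applying twirling to $\omega$ yields a genuinely Bell-diagonal state $\rho'$ whose fidelity with $\sigma_+$ is unchanged, so $\rho'$ is still close to $\sigma_+$ in fidelity and, via the Fuchs--van de Graaf inequalities in Eq.~\eqref{eq_fuchs_graaf}, close in trace distance. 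Feeding $\rho'$ into the same entropy-continuity bound then gives $S(\rho')<h$. Since twirling and $\Lambda$ are both LOCC and consume only finitely many copies of $\rho$, the composite protocol is LOCC on a finite number of copies, as required.

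The main obstacle I anticipate is bookkeeping the two quantitative estimates so that they compose cleanly: I need to choose the target trace-distance $\epsilon$ in the single-distillability step small enough that, after twirling (which does not increase the trace distance to $\sigma_+$) and after applying the entropy continuity bound on the fixed four-dimensional space, the resulting entropy lands strictly below the prescribed $h$. Because $h$ can be taken arbitrarily close to $S(\rho)$ but is bounded away from zero by hypothesis, and because the continuity bound is monotone and vanishes as $\epsilon\to0$, such a choice of $\epsilon$ always exists; one simply inverts the continuity bound to read off the admissible $\epsilon$. A minor subtlety is to confirm that twirling genuinely preserves the $\sigma_+$-fidelity rather than merely not decreasing it, but this is exactly the standard property of the Bell twirl and is established in Appendix~\ref{sec_Twirling}. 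Aside from this, the argument is a direct chaining of trace-distance continuity of entropy with the fidelity-preserving property of twirling, both on a fixed low-dimensional space, so no deeper protocol is needed at this stage.
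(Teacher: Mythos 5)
Your proposal is correct and follows essentially the same route as the paper's proof: single-distillability gives an output state with fidelity close to $1$ with $\sigma_+$, twirling (Appendix~\ref{sec_Twirling}) makes it Bell-diagonal while preserving that fidelity, and continuity of the von Neumann entropy on the fixed two-qubit space, together with $S(\sigma_+)=0$, forces $S(\rho')<h$ for $\epsilon$ small enough. The only difference is presentational: you make the continuity step quantitative via a Fannes--Audenaert bound, whereas the paper simply invokes continuity of $S$; this is a harmless strengthening, not a different argument.
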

\begin{proof}
Given a single-distillable state $\rho$, for every $\epsilon >0$ there is an integer $N$ and an LOCC protocol $\Lambda$ such that $\tau= \Lambda(\rho^{\otimes N})$ has fidelity $\tr(\sigma_+\tau)\ge1-\epsilon$, where we used Eq.~\eqref{eq_fuchs_graaf}. After twirling, see Section~\ref{sec_Twirling}, the resulting state $\rho'$ is Bell-diagonal while having maintained the same fidelity with $\sigma_+$. Since $\sigma_+$ has a von Neumann entropy of $S(\sigma_+) = 0$ and $S$ is continuous, we have that $S(\rho') <h$ for $\epsilon$ sufficiently small.
\end{proof}
Now, we can proceed to apply the hashing protocol \cite{hashingoriginal, hashinglater}. This counter-intuitive procedure uses ideas from Shannon's coding theorem in order to pick $m$ maximally entangled pure states from the $n$ mixed input states with high probability. For this, only about $\frac n2(1+S(\rho))$ states have to be sacrificed.

The hashing protocol purifies $n$ copies of a Bell-diagonal two-qubit state $\rho$ with von Neumann entropy of $S(\rho) < 1$. The protocol can be divided into two parts: An iterative random quantum measurement is performed on some of the qubit pairs, gathering parity information. A classical protocol determines the Bell-states of the remaining qubits from the gained parity information.

For clarification, we outline the rough idea of the classical part first: We encode the index of the four Bell states in two classical bits. Therefore the mixture of Bell states $\rho^{\otimes n}$ can be written as a mixture of random messages of $n$ bits pairs of an information source $X$ with Shannon entropy $H(X) = S(\rho)$, where $H(X)= -\sum_x \mathbb P(X=x)\log_2\mathbb P(X=x)$. From Shannon's coding theorem it is known that we can compress a message from this source reliably to a message of $n H(X)$ bits, provided $n$ is chosen sufficiently large. Vice versa, $n H(X)$ bits of information about the message are sufficient to guess the complete message, where the error probability vanishes asymptotically. This allows us to sacrifice $n H(X)/\eta$ Bell states ($2 n H(X)/\eta$ bits from the message) to learn the state of the remaining $m=n (1- H(X)/\eta)$ qubit pairs. Based on the acquired knowledge, the remaining qubit pairs are known Bell states and can be transformed by means of LOCC to $\sigma_+^{\otimes m}$. Here $\eta$ is the number of bits which we can acquire by measuring a qubit pair. Although a measurement on a qubit pair can give two bits of information, the limitations of LOCC allow us only to obtain one bit. To incorporate the effect of finite $n$, we further half the number of remaining qubit pairs. According to Refs.~[\onlinecite{hashingoriginal}, \onlinecite{hashinglater}], we have the following statement:

\begin{lemma}\normalfont\label{lemm_hashing_exact}
Assume a source of Bell diagonal states $\rho$ with $S(\rho)<1$ and choose a target fidelity $F<1$. Then there exists an LOCC protocol that consumes $n$ copies of $\rho$ while producing a state $\rho'$ of $m$ qubit pairs $\rho'$, such that $\tr(\sigma_+^{\otimes m}\rho')\ge F$ while $m\ge \frac n2[1-S(\rho)]$.
\end{lemma}

The steps for proving Lemma~\ref{lemm_hashing_exact} are explained in Appendix~\ref{app_hash}.

\begin{theorem}\normalfont
Single distillability implies rate distillability.
\end{theorem}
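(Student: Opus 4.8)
The plan is to combine Lemma~\ref{lem_single_hashable} with the hashing protocol of Appendix~\ref{sec_hashing}. The idea is first to replace $\rho$ by a resource to which hashing applies, namely a Bell-diagonal state of entropy strictly below one, and then to run hashing on many independent copies of that resource so that the linear yield of the protocol translates into a positive distillation rate for $\rho$.

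First I would use the single distillability of $\rho$ to prepare the hashing input. For a target entropy $h<1$, Lemma~\ref{lem_single_hashable} produces a Bell-diagonal state $\rho'$ with $S(\rho')<h<1$ from a fixed finite number $N_0$ of copies of $\rho$ via LOCC; when $S(\rho)\le 1$ one simply picks $h<S(\rho)$, and otherwise any $h<1$ works. (If $S(\rho)=0$ the state $\rho$ is pure and entangled, and the construction underlying the lemma—distill close to $\sigma_+$ and twirl—still yields such a $\rho'$.) Applying the same LOCC map to $n$ disjoint blocks of $N_0$ copies each then produces $n$ independent and identically distributed copies of $\rho'$ out of $nN_0$ copies of $\rho$.

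Next I would feed these $n$ copies into the hashing protocol. For a Bell-diagonal state it distills, with a fidelity defect $\delta_n\to 0$ as $n\to\infty$, a number $M_n$ of output qubit pairs with $M_n/n\to 1-S(\rho')$, and since $S(\rho')<1$ this limiting yield is strictly positive. Converting the fidelity with the second Fuchs--van de Graaf inequality in Eq.~\eqref{eq_fuchs_graaf}, the composed LOCC map $\Lambda_n$ obeys
\begin{equation}
  \norm{\sigma_+^{\otimes M_n}-\Lambda_n(\rho^{\otimes nN_0})}_1\le 2\sqrt{\delta_n}\xrightarrow[n\to\infty]{}0.
\end{equation}
The total investment is $nN_0$ copies of $\rho$, so the rate of the composed protocol tends to $(1-S(\rho'))/N_0>0$. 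Fixing any $R$ with $0<R<(1-S(\rho'))/N_0$ then guarantees $M_n\ge R\,(nN_0)$ for all sufficiently large $n$, so the rate constraint in the definition of rate distillability is met while the trace distance tends to zero; hence the corresponding infimum vanishes and $\rho$ is rate-distillable.

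The main obstacle lies entirely in the hashing step, namely the assertion that a Bell-diagonal state with $S(\rho')<1$ can be distilled at the asymptotic rate $1-S(\rho')$ with vanishing error. This is exactly the content of the hashing protocol deferred to Appendix~\ref{sec_hashing}, so the work left in this theorem is bookkeeping: checking that the $n$ blocks yield genuinely i.i.d.\ copies of $\rho'$, that the composition of the entropy-reduction map with the hashing map is again LOCC, and that the sublinear corrections to $M_n$ do not spoil the chosen rate $R$.
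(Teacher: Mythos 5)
Your proof follows essentially the same route as the paper's: reduce via Lemma~\ref{lem_single_hashable} to a Bell-diagonal state $\rho'$ with entropy below one, run the hashing protocol of Appendix~\ref{sec_hashing} on many copies of $\rho'$, convert fidelity to trace distance with Eq.~\eqref{eq_fuchs_graaf}, and observe that the fixed overhead of copies consumed per copy of $\rho'$ still leaves a positive total rate. The only cosmetic difference is that you quote the asymptotic hashing yield $1-S(\rho')$, whereas Theorem~\ref{theo_hashing_exact} as stated guarantees only $\tfrac12[1-S(\rho')]$ per input pair; since positivity of the rate is all that matters, this does not affect the argument.
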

\begin{proof}
Given a single-distillable state $\rho$, we can use the LOCC protocol from Lemma~\ref{lem_single_hashable} to map $N$ copies of $\rho$ onto one copy of a Bell diagonal state $\rho'$ with $h=S( \rho') < 1$. Therefore $\rho'$ is suitable for applying the hashing protocol. According to Lemma~\ref{lemm_hashing_exact}, for any $0<\epsilon<1$ there exists an LOCC protocol $\Lambda$ mapping $n$ qubit pairs to $m$ qubit pairs such that $\tr[\sigma_+^{\otimes m}\Lambda(\rho^{\prime \otimes n})]\ge 1-\frac{\epsilon^2}4$ while $m\ge \frac n2(1-h)$. This yields in terms of the trace distance,
\begin{equation}
  \norm{\sigma_+^{\otimes m}-\Lambda(\rho^{\prime \otimes n})]}\le 2\sqrt{1-\left(1-\frac{\epsilon^2}4\right)}=\epsilon.
\end{equation}
Combining the rate $\frac mn\ge \frac 12(1-h)$ with the the number of copies $N$ used to obtain $\rho'$, we can distill entanglement by means of LOCC with a fixed total rate of at least $(1-h)/(2N)>0$. Note that here $N$ is a fixed constant throughout the whole procedure while $n$ and $m$ are determined by the error tolerance $\epsilon$.
\end{proof}


\section{Discussion}
\label{sec_conclusions}

The seven definitions in Sec.~\ref{sec_collection_defs} have been selected by three criteria: occurrence in literature, having an intuitive definition, and capturing essential properties of bound entanglement. For example, the transition from single distillability to rate distillability summarizes what can be achieved with the hashing protocol. Similarly, the transition between qubit distillability and probabilistic distillability summarizes the recurrence method. From the aspect of usability, filter distillability and projection distillability are the simplest definitions, because they do not require one to consider the convoluted class of LOCC operations. It is even surprising that all of the definitions are equivalent. In particular, rate distillability requires an asymptotic constant rate of distillation while qubit probabilistic distillability only aims for distillation of a single, arbitrarily entangled qubit state with nonvanishing probability.

The big repertory of equivalent definitions allows one to choose the best suited definition for any problem.
For example, projection distillability immediately implies that the set of undistillable states is closed: If $\Pi_A\otimes\Pi_B$ maps $N$ copies of $\rho$ to an entangled two qubit state $\sigma$, then this is also the case for a small neighborhood around $\rho$, since the entangled qubit states are an open set. Similarly, a positive partial transpose implies undistillability, which is a direct consequence of the distillability of all entangled qubit states \cite{entangleandfidelity}. However, one of the central questions regarding bound entanglement still remains unanswered, namely whether there exist bound entangled states which do not have a positive partial transpose, or equivalently \cite{Shor2001NonadditivityOfBipartite}, whether the set of undistillable state is convex.

Summarizing, we presented seven different definitions of distillable entanglement, covering many notions used in the literature as well as introducing new nuances to complete the systematic picture outlined in Figure~\ref{fig_def}. We structured the definitions such that they are aligned with the structure of the proofs, making a unified picture on the matter available. We also provided a treatment of the recurrence and hashing protocols, so that the current article can serve as a self-contained entry point for new research on the topic.


\begin{acknowledgements}
We thank
Dagmar Bruß,
Adán Cabello,
Hermann Kampermann,
Lisa Weinbrenner, and
Zeng-Peng Xu
for discussions.
We are particularly indebted to Paweł Horodecki for advice and pointing out key literature.
This work was supported by the Deutsche Forschungsgemeinschaft (DFG, German Research Foundation, project numbers 447948357 and 440958198), the Sino-German Center for Research Promotion (Project M-0294), the ERC (Consolidator Grant 683107/TempoQ) and the German Ministry of Education and Research (Project QuKuK, BMBF Grant No.~16KIS1618K).
MG acknowledges support from the House of Young Talents of the University of Siegen.
\end{acknowledgements}

\appendix


\section{Further definitions of distillability}

\subsection{Qubit rate distillability}
\label{sec_qbit_rate}
The upper left quadrant in Figure~\ref{fig_def} of the main text is vacant and one may wonder whether there is a suitable definition of distillability that fits there. This would correspond to a notion of qubit distillability with a finite rate. To our knowledge, no such definition has been used in literature, but for completeness we contrive here a possible definition and show that it is equivalent to qubit distillability.

A state $\rho$ is \emph{qubit rate-distillable} if there is a positive rate $R$ and a threshold $N_0$, such that for all $N\ge N_0$ one can find an LOCC protocol $\Lambda$ with the property that $\Lambda[\rho^{\otimes N}]=\Sigma$ is a state on $M\ge RN$ qubit pairs and the reduced state $\Sigma_k$ for each qubit pair $k$ is entangled.

\begin{theorem}
    \normalfont 
    Qubit rate distillability is equivalent to qubit distillability.
\end{theorem}
\begin{proof}
Let $\rho$ be qubit rate-distillable. For $N_0$ copies of $\rho$ and the corresponding LOCC protocol $\Lambda$ yielding $\Lambda[\rho^{N_0}]=\Sigma$, any of the reduced states $\Sigma_k$ is qubit entangled and hence $\rho$ is qubit-distillable.

Conversely, let $\rho$ be qubit-distillable. Then there is an integer $N'$ and an LOCC protocol $\Lambda$ such that $\Lambda[ \rho^{\otimes N'}]=\tau$ is two-qubit entangled. For $N\ge N'$, we apply $\Lambda$ to $M$ chunks of $N^\prime$ copies of $\rho$, with $M$ maximal, that is, $M=\floor{ N/N' }$. Therefore, we obtain from $\rho^{\otimes N}$ the state $\Sigma=\tau^{\otimes M}$ for which trivially each $\Sigma_k$ is qubit entangled. Finally, choosing the rate $R=1/(2N')$, it follows that $M\ge RN$ for all $N \ge N_0= 2N'$.
\end{proof}

\subsection{Extended rate distillability}
\label{sec_more_definitions}

The definition of rate distillability requires the distillation of entangled qubit pairs with a fixed rate. Instead, it is also possible to require the creation of a high-dimensional maximally entangled state, in particular of
\begin{equation}
  \ket{\psi_{d}^{+}} = \frac1{\sqrt{d}}\sum_{k=0}^{d-1} \ket{kk}.
\end{equation}
Note that $\ket{\psi_{2^M}^+} = \ket{\Phi^+}^{\otimes M}$.
Embracing this idea, a bipartite state $\rho$ is \emph{extended rate-distillable} if there is a positive rate $R$, such that
\begin{equation}
  \inf\lbrace\norm{\proj{\psi_d^+}-\Lambda(\rho^{\otimes N})}_1 \mid d,N\in\mathbb N,\;
  (\Lambda)\in \LOCC,\; \log_2 d\ge RN \rbrace = 0.
\end{equation}
Any rate-distillable state is also extended rate-distillable, since the former reduces to the latter if we restrict in the infimum $d$ to be a power of 2. The converse is also true.
\begin{theorem}\normalfont
Extended rate distillability is equivalent to rate distillability.
\end{theorem}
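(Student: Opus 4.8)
The only remaining implication is that extended rate distillability implies rate distillability, since the converse is established above. The plan is to take a protocol that produces a state close to $\proj{\psi_d^+}$ and postcompose it with a deterministic LOCC map that converts the $d$-dimensional maximally entangled state into $\lfloor \log_2 d\rfloor$ maximally entangled qubit pairs; the rate is then essentially preserved and the approximation error does not grow, because the trace distance is contractive under channels.

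First I would establish the conversion step. For $2^M \le d$ the uniform Schmidt vector $(1/d,\dots,1/d)$ of $\ket{\psi_d^+}$ is majorized by the Schmidt vector $(2^{-M},\dots,2^{-M},0,\dots,0)$ of $\sigma_+^{\otimes M} = \proj{\psi_{2^M}^+}$, as one checks by comparing the partial sums $k/d \le \min\{k/2^M,1\}$. By Nielsen's theorem on LOCC convertibility of pure states this yields a deterministic LOCC channel $\Gamma$ with $\Gamma(\proj{\psi_d^+}) = \sigma_+^{\otimes M}$ for $M = \lfloor \log_2 d\rfloor$. Composing $\Gamma$ with the extended-rate protocol and using contractivity, $\norm{\sigma_+^{\otimes M} - \Gamma(\Lambda(\rho^{\otimes N}))}_1 \le \norm{\proj{\psi_d^+} - \Lambda(\rho^{\otimes N})}_1$, so the error is inherited while the output consists of $M \ge \log_2 d - 1 \ge RN - 1$ qubit pairs.

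The main obstacle is that the floor costs one ebit, so $M \ge RN - 1$ fails to give a uniform positive rate when $N$ is small, and the definition does not let us force $N$ to be large. I would resolve this by amplification: tensoring $k$ independent copies of the base protocol uses $kN$ copies of $\rho$ and, after a local relabeling, produces a state close to the single maximally entangled state $\proj{\psi_{d^k}^+}$; crucially, converting this combined state to qubit pairs incurs only one ebit of floor loss in total rather than one per block, while the rate is preserved since $\log_2 d^k \ge R(kN)$. The error grows at most to $k$ times the base error by subadditivity of the trace distance on tensor products. Fixing $k \ge 2/R$ beforehand guarantees $kN \ge 2/R$, hence $M = \lfloor \log_2 d^k\rfloor \ge R(kN) - 1 \ge \tfrac{R}{2}(kN)$; choosing the base error below $\epsilon/k$ then delivers, for every $\epsilon > 0$, an LOCC protocol producing $M \ge \tfrac{R}{2}(kN)$ qubit pairs within trace distance $\epsilon$ of $\sigma_+^{\otimes M}$. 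This exhibits rate distillability with rate $R/2$ and completes the equivalence.
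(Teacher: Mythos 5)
Your proof is correct, but it takes a genuinely different route from the paper. Where you invoke Nielsen's majorization theorem to obtain an \emph{exact, deterministic} LOCC conversion $\proj{\psi_d^+}\mapsto\sigma_+^{\otimes M}$ with $M=\lfloor\log_2 d\rfloor$ (the majorization check $k/d\le\min\{k/2^M,1\}$ is the right one, and contractivity of the trace norm then transfers the error), the paper instead builds the conversion by hand: Alice and Bob locally project onto matching blocks of dimension $2^{M_d}$ with $M_d=\lfloor\omega\log_2 d\rfloor$ for a fixed ratio $\omega<1$, which succeeds only with probability $\kappa 2^{M_d}/d\ge 1-d^{\omega-1}$ and therefore adds an extra error term $2d^{\omega-1}$ that must be argued to vanish as $d\to\infty$. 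The two approaches also handle the floor-induced rate loss differently but in an essentially parallel way: you tensor $k\ge 2/R$ copies of the base protocol so that the single lost ebit costs at most half the rate (your bookkeeping $M\ge RkN-1\ge\tfrac{R}{2}kN$ and the error bound $k\cdot(\epsilon/k)$ via subadditivity of the trace distance on tensor products are both sound), whereas the paper sacrifices a constant fraction $1-\omega$ of the rate and uses the monotonicity $\delta(\rho|d,R)\ge\delta(\rho|d^2,R)$ --- itself a two-copy amplification --- to pass to a $\liminf$ over $d\to\infty$. What your route buys is brevity and a lossless conversion step; what it costs is reliance on Nielsen's theorem, an external result that the paper, which aims to be self-contained, deliberately avoids in favor of an elementary explicit protocol. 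One presentational caveat: you should state explicitly that the one-round measure-and-correct protocol furnished by Nielsen's theorem is an admissible LOCC protocol in the paper's sense (finitely many rounds and outcomes), and that tensoring and composing such protocols stays within that class; these are easy observations but they are doing real work in your argument.
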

\begin{proof}
It remains to show that extended rate-distillability implies rate-distillability. For this we note that with the definition
\begin{equation}
  \delta(\rho|d,R) =
  \inf\lbrace\norm{\proj{\psi_d^+}-\Lambda(\rho^{\otimes N})}_1 \mid N\in\mathbb N,\;
  (\Lambda)\in \LOCC,\; RN\le \log_2 d \rbrace,
\end{equation}
we can rewrite extended rate-distillability to $\inf\lbrace \delta(\rho|d,R) \mid d\in\mathbb N\rbrace=0$. Since we can always use an LOCC protocol $\Lambda$ on two copies of $\rho^{\otimes N}$, we clearly have $\delta(\rho|d,R)\ge \delta(\rho|d^2,R)$ and thus we can equivalently require
\begin{equation}
  \liminf_{d\to\infty}\delta(\rho|d,R)=0.
\end{equation}
In comparison, rate-distillability is expressed by the same equation, but with $d$ limited to be a power of two. In particular, it is sufficient to show that this restriction of $d$ does not change the value of the limit. While this implication is intuitively plausible, we here give an explicit proof.

We first fix a qubit ratio $0<\omega<1$ and consider only $M_d=\floor{ \omega\log_2 d }$ qubit pairs. We partition the $d$-dimensional local space into $\kappa=\floor{ d2^{-M_d} }$ local subspaces of dimension $2^{M_d}$ and a remaining subspace. This gives rise to the local projections
\begin{equation}
  \Pi_j = \sum_{\ell=0}^{2^{M_d}-1} \ketbra{\ell}{j 2^{M_d}+\ell}
\end{equation}
of $j$'th local subspace onto the $M_d$ local qubits, where $j=0,1,\dotsc,\kappa-1$. A distillation protocol $\Lambda$ is now followed by the LOCC protocol $\Gamma$ which implements the projections $T_j=\Pi_j\otimes\Pi_j$. By writing $\Lambda(\rho^{\otimes N})=\proj{\psi^+_d}+\Delta$, we have
\begin{equation}\begin{split}
  \Gamma[\Lambda(\rho^{\otimes N})]
  &=\sum_j T_j \proj{\psi^+_d}T^\dag_j +G(\proj{\psi^+_d}) + \Gamma(\Delta)\\
  &=\frac{\kappa 2^{M_d}}{d} \sigma_+^{\otimes {M_d}} + G(\proj{\psi^+_d}) + \Gamma(\Delta)
\end{split}\end{equation}
Here, $\sigma_+=\proj{\Phi^+}$ and $G$ is the map that occurs if the measurement outcomes of Alice and Bob do not coincide or if either of the two parties obtains an outcome from the remaining subspace. Since $\Delta$ is traceless, so is $\Gamma(\Delta)$ and hence $\tr[G(\proj{\psi^+_d})]=1-\frac{\kappa 2^{M_d}}d$. Using the triangular inequality and the fact that channels cannot increase the trace distance \cite{nielsen_chuang_2010}, we find
\begin{equation}\begin{split}
  \inf_\Lambda \norm{ \sigma_+^{\otimes M_d} - \Gamma[\Lambda( \rho^{\otimes N})]}_1
  &= \norm{\left(1-\frac{\kappa 2^{M_d}}d\right) \sigma_+^{\otimes M_d}-G(\proj{\psi^+_d})- \Gamma(\Delta)}_1\\
  &\le 2\left(1-\frac{\kappa 2^{M_d}}d\right) + \norm{\proj{\psi^+_d}-\Lambda(\rho^{\otimes N})}_1.
\end{split}\end{equation}
Due to $d-\kappa 2^{M_d}< d^\omega$, we immediately obtain
\begin{equation}
  \delta(\rho|2^{M_d},R)< 2d^{\omega-1}+\delta(\rho|d,R).
\end{equation}
Hence, if we have a sequence $d_1,d_2,\dotsc$ with $d_i\to\infty$ and for which $\lim_i \delta(\rho|d_i,R)= \liminf_d \delta(\rho|d,R)$, then also $2^{M_{d_i}}\to\infty$ and $\lim_i \delta(\rho|2^{M_{d_i}},R)=\liminf_d \delta(\rho|d,R)$ due to $\omega<1$. This proofs the above intuitive implication.
\end{proof}


\subsection{Convergent rate distillability}
\label{App_convrate}
The definitions in Section~\ref{sec_collection_defs} of the main text represent various approaches to bound entanglement. Despite of their variety, they are either based on an infimum over the one-norm or require the production of some entangled state. A different strategy is to require a convergent rate as the number of copies increases. The following definition is based on definitions used in Refs.~[\onlinecite{Rains1999}, \onlinecite{hashinglater}].

For a given state $\rho$, we consider a sequence of LOCC protocols $(\Lambda_n)_n$ such that there exists a corresponding sequence of dimensions $(d_n)_n$ obeying
\begin{equation}
  \lim_{n\to\infty}\norm{\proj{\psi_{d_n}^{+}}-\Lambda_n( \rho^{\otimes n} )}_1 = 0.
\end{equation}
A bipartite state $\rho$ is \emph{convergent rate-distillable}, if
\begin{equation}\label{eq_distillable_ent}
  \limsup_{n\to\infty} \frac{\log_2 (d_n)}{n} > 0
\end{equation}
can be achieved for an appropriate sequence $(\Lambda_n)_n$. This definition is again equivalent to any other notion of distillable entanglement.
\begin{theorem}\normalfont
Convergent rate distillability is equivalent to extended rate distillability.
\end{theorem}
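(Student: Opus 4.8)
The plan is to prove the two implications separately; the direction from convergent to extended rate distillability is immediate, whereas the converse carries the actual content.

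For convergent $\Rightarrow$ extended, suppose $\rho$ admits a sequence $(\Lambda_n)_n$ with dimensions $(d_n)_n$ such that $\norm{\proj{\psi_{d_n}^+}-\Lambda_n(\rho^{\otimes n})}_1\to 0$ and $\limsup_{n}\log_2(d_n)/n=c>0$. I would fix $R=c/2>0$. By the definition of the limit superior there are infinitely many $n$ with $\log_2 d_n\ge Rn$, and along this subsequence the trace distance still tends to zero. Reading each such $(d_n,n,\Lambda_n)$ as an admissible triple then shows that the defining infimum of extended rate distillability for this $R$ equals $0$.

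The hard direction is extended $\Rightarrow$ convergent, where the obstacle is a mismatch of quantifiers: extended rate distillability only guarantees, for each error level, \emph{some} triple $(d,N,\Lambda)$ with $\log_2 d\ge RN$, and the block lengths $N$ need not be consecutive or even unbounded, whereas convergent rate distillability demands a protocol for \emph{every} $n$ with the error tending to zero as a genuine limit. My first step would be to upgrade the existence statement so that the block length can be taken arbitrarily large: given any target $N_0$ and any $\epsilon>0$, I would pick one good block $(d,N,\Lambda)$ of sufficiently small error and tensor $k=\lceil N_0/N\rceil$ independent copies of it. Then the output is close to $\proj{\psi_{d^k}^+}$, the constraint survives since $\log_2 d^k=k\log_2 d\ge R(kN)$, and the telescoping estimate $\norm{\sigma^{\otimes k}-\tau^{\otimes k}}_1\le k\norm{\sigma-\tau}_1$ lets me keep the total error below $\epsilon$ by choosing the per-block error below $\epsilon/k$.

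With this upgrade I can extract triples $(d_m,N_m,\Lambda_m)$ with error below $1/m$, satisfying $\log_2 d_m\ge RN_m$, and with $N_1<N_2<\cdots\to\infty$. I would then define $\Lambda_n$ for every $n\ge N_1$ by selecting the largest $m$ with $N_m\le n$, applying $\Lambda_m$ to $N_m$ of the copies and discarding the remaining $n-N_m$ (discarding being a valid LOCC operation), and setting $d_n=d_m$; the finitely many $n<N_1$ are handled trivially. Since discarding preserves the achieved fidelity, the error at length $n$ equals that of $\Lambda_m$ and is hence below $1/m\to 0$ as $n\to\infty$, which yields the required vanishing limit, while at the sampled lengths $n=N_m$ the rate is $\log_2 d_m/N_m\ge R$, so $\limsup_n \log_2(d_n)/n\ge R>0$. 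The main obstacle is precisely this interplay between a limit that must hold for all $n$ and a limit superior that need only hold along a subsequence: padding by discarding resolves it because the fidelity is never spoiled between the sampled points $N_m$, where the rate is only allowed to decrease harmlessly; the sole technical care needed is arranging $N_m\to\infty$ without losing error control, which the tensoring step with the telescoping bound secures.
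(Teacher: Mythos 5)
Your proposal is correct and takes essentially the same route as the paper: the direction from convergent to extended rate distillability is the same subsequence extraction from the limit superior, and your hard direction---tensoring blocks with the telescoping bound $\norm{\sigma^{\otimes k}-\tau^{\otimes k}}_1\le k\norm{\sigma-\tau}_1$ to force the block lengths to grow, then re-indexing by discarding surplus copies---is precisely a careful expansion of what the paper compresses into the single remark that the condition $N_n\ge n$ ``can always be satisfied, since using more copies cannot deteriorate the distillation protocol.'' The only cosmetic point is the quantifier order when fixing the per-block error: since $k=\lceil N_0/N\rceil$ depends on the block you pick, choose the per-block error below $\epsilon/N_0\le\epsilon/k$, which suffices because $k\le N_0$ for every admissible block.
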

\begin{proof}
Let $\rho$ be extended rate-distillable. Then for some $R>0$ there is a sequence of LOCC protocols $(\Lambda_n)_n$ and sequences $(N_n)_n$, $(d_n)_n$ with $\log_2 d_n\ge RN_n\ge Rn$, such that
\begin{equation}
  \lim_{n\to\infty}\norm{\proj{\psi_{d_n}^{+}} - \Lambda_n(\rho^{\otimes N_n}) }_1= 0.
\end{equation}
Indeed, the condition $N_n\ge n$ can always be satisfied, since using more copies cannot deteriorate the distillation protocol. Consequently, $(\Lambda_n)_n$ is an appropriate sequence of LOCC protocols for convergent rate distillability with
\begin{equation}
  \limsup_{n\to\infty} \frac{\log_2 (d_n)}{n} \ge \lim_{n\to\infty} \frac{\log_2(d_n)}{N_n} \geq R > 0.
\end{equation}
This completes the first part of the proof.

For the converse, let $\rho$ be convergent rate-distillable. Therefore, there is a sequence $ (\Lambda_n)_n$ and dimensions $(d_n)_n$ obeying Eq.~\eqref{eq_distillable_ent}. We can now find a sequence $(n_k)_k$ with
\begin{equation}
  \lim_{k \to \infty} \frac{\log_2(d_{n_k})}{n_k} >0
\end{equation}
which tells us that there is a rate $R>0$ with $\log_2(d_{n_k}) \geq R n_k$ for $k\ge k_0$ and $k_0$ sufficiently large. Then
\begin{equation}
  \inf\lbrace\norm{\proj{\psi^+_{d_{n_k}}}-\Lambda_{n_k}(\rho^{\otimes n_k})}_1 \mid k\ge k_0 \rbrace=0,
\end{equation}
and hence the state is also extended rate distillable.
\end{proof}


\subsection{Impossibility of perfect distillation}
\label{App_perfect}

One might also ask whether it is sensible to require that a pure entangled state can be distilled after consuming only a finite number of copies of $\rho$. This is not the case. Assuming the contrary, there must local filters $A$ and $B$ such that
\begin{equation}\label{eq_exact_filter}
  ( A \otimes B ) \rho^{\otimes N } ( A \otimes B )^\dag = \sigma'
\end{equation}
for some pure entangled state $\sigma'$. We assume now that our state is $\rho_\epsilon = (1 - \epsilon) \sigma' + \epsilon \frac{\openone}{\tr(\openone)}$ with $0<\epsilon < 1$.
Then
\begin{equation}
  \rho_\epsilon^{\otimes N } = ( 1 - \epsilon^N ) \tau + \epsilon^N \left(\frac\openone{\tr(\openone)}\right)^{\otimes N }
\end{equation}
where we only need to know that $\tau$ is some state.
Applying the local filters we obtain from Eq.~\eqref{eq_exact_filter},
\begin{equation}
  ( 1 - \epsilon^N ) ( A \otimes B ) \tau ( A \otimes B )^\dag + \epsilon^N \tr(\openone)^{-N } ( AA^\dagger \otimes B B^\dagger )=\sigma'.
\end{equation}
The left hand side is a convex combination of two states which must combine to the pure state $\sigma'$. Hence each of the states must be equal to $\sigma'$. But this is a contradiction, since $\tr(\openone)^{-N } AA^\dag\otimes BB^\dag$ is a product state while $\sigma'$ is entangled.


\section{Proof of Lemma~\ref{lemm_hashing_exact}}\label{app_hash}
In this appendix we provide the steps to prove Lemma~\ref{lemm_hashing_exact}, see Refs.~[\onlinecite{hashingoriginal}, \onlinecite{hashinglater}] for further details.

\begin{lemma*}
Assume a source of Bell diagonal states $\rho$ with $S(\rho)<1$ and choose a target fidelity $F<1$. Then there exists an LOCC protocol that consumes $n$ copies of $\rho$ while producing a state $\rho'$ of $m$ qubit pairs $\rho'$, such that $\tr(\sigma_+^{\otimes m}\rho')\ge F$ while $m\ge \frac n2[1-S(\rho)]$.
\end{lemma*}

For the proof, we start by writing the Bell-diagonal state as
\begin{equation}
  \rho = \mathbb E[\,\proj{\psi(X)}\,] = \sum_{x=0}^3 \mathbb P(X=x)\proj{\psi(x)},
\end{equation}
where $\ket{\psi(0)}=\ket{\Phi^+}$, $\ket{\psi(1)}=\ket{\Psi^+}$, $\ket{\psi(2)}=\ket{\Phi^-}$, and $\ket{\psi(3))}=\ket{\Psi^-}$ and $X$ is an appropriate random variable taking values $0,1,2,3$. Correspondingly, we have
\begin{equation}
  \rho^{\otimes n} = \sum_{x_1,\dotsc,x_n} \mathbb P(\vec X=\vec x)\proj{\psi(\vec x)}
\end{equation}
with $\vec X=(X_1,\dotsc,X_n)$ identical and independently distributed (i.i.d.) random variables and 
$\ket{\psi(\vec x)}=\ket{\psi(x_1)}\ket{\psi(x_2)}\dotsm\ket{\psi(x_n)}$. If we would know that the value of $\vec X$ is $\vec x$, then the state under this knowledge would reduce to the pure entangled state $\proj{\psi(\vec x)}$. Instead of determining the value of $\vec X$ with certainty, we rather aim to guess the value $\vec X$ correctly with high probability. For this it is sufficient to determine the value only if it is in the $\epsilon$-typical set
\begin{equation}
  A_{n,\epsilon} = \set{ \vec x | -\epsilon< -\tfrac1n\log_2[\mathbb P(\vec X=\vec x)] - H(X) \le \epsilon }.
\end{equation}
As we discuss in Appendix~\ref{sec_eps_typical}, the probability $\mathbb P[\vec X\notin A_{n,\epsilon}]$ converges to zero as $n\to\infty$, that is, the value of $\vec X$ is almost certainly within the $\epsilon$-typical set if $n$ is sufficiently large. At the same time, the number of elements in the $\epsilon$-typical set is upper bounded by $2^{nH(X)+n\epsilon}$. Hence it suffices to ``learn'' $nH(X)+n\epsilon$ bits of $\vec X$ to determine the value of $\vec X$.

This ``learning'' is performed by means of an LOCC protocol. The quantum-mechanical implementation of this protocol is described in Ref.~[\onlinecite{hashinglater}]. To understand its effect, we assume that $\vec X=\vec x$ and we write $\mathbf x$ for the bit vector of length $2n$ corresponding $\vec x$ by encoding each entry $x_i=0,1,2,3$ as bit pair $00,01,10,11$. For a given bit string $\mathbf s$, the LOCC protocol yields the parity bit $t(\mathbf s;\vec x)=\sum_j \mathbf s_j\mathbf x_j\mod 2$. In addition, the vector $\vec x$ gets transformed in a certain way, $\vec x\mapsto f(\mathbf s;\vec x)$, and also looses one of its entries. After iterating this procedure with $r\le n$ bit strings $\mathbf s^{(1)},\dotsc,\mathbf s^{(r)}$, one obtains the parity bit string $\mathbf t(\vec x)$ and the final transformed vector $\vec x^{(r)}$. Here, $\mathbf t_k(\vec x)= t(\mathbf s^{(k)};\vec x^{(k-1)})$ and $\vec x^{(k)}=f(\mathbf s^{(k)}; x^{(k-1)})$ with $\vec x^{(0)}=\vec x$. If each bit string $\mathbf s^{(k)}$ is chosen randomly from the uniform bit string distribution, then the probability that two vectors $\vec x$ and $\vec y$ yield the same parity vector but different final vectors is given by
\begin{equation}\label{eq_parity_prob}
  \mathbb P[ {\mathbf t}(\vec x) = {\mathbf t}(\vec y) \text{ and } \vec x^{(r)}\ne \vec y^{(r)}] \le 2^{-r}.
\end{equation}
(Notice that the probabilities are with respect to the random bit strings $\mathbf s^{(1)},\mathbf s^{(2)}\dotsc$, which determine $\mathbf t(\vec z)$ and $\vec z^{(r)}$.) This follows by considering the above procedure as a stochastic process. We restrict our attention only to the trajectory where the parities are equal and the resulting vectors are different, $t_k(\vec x)=t_k(\vec y)$ and $\vec x^{(k)}\ne \vec y^{(k)}$. The transition probability along this path at step $k$ is bounded by
\begin{multline}
  \mathbb P[ t(\mathbf s^{(k)};\vec x^{(k-1)})= t(\mathbf s^{(k)};\vec y^{(k-1)}) \text{ and }
  f(\mathbf s^{(k)};\vec x^{(k-1)})\ne f(\mathbf s^{(k)};\vec y^{(k-1)})]
  \\\le
  \mathbb P[ t(\mathbf s^{(k)};\vec x^{(k-1)})= t(\mathbf s^{(k)};\vec y^{(k-1)})].
\end{multline}
Since we have $\vec x^{(k-1)}\ne \vec y^{(k-1)}$ from the previous step, the probability to obtain equal parity bits for $\vec x^{(k-1)}$ and $\vec y^{(k-1)}$ is exactly $\frac12$.

As announced before, the parity string $\mathbf t$ allows us now to determine the vector $\vec x^{(r)}$ with high probability if $r$ is chosen sufficiently large. This can only fail due to two reasons. Either the original vector $\vec x$ is not $\epsilon$-typical, or there is still more than one resulting vector that matches all the parities. Considering these two sources of failure, we obtain the total failure probability
\begin{equation}\begin{split}
  p_\text{fail}(n,\epsilon,r)
  &=     q_{n,\epsilon} + \mathbb P[ \vec X\in A_{n,\epsilon} \text{ and } \exists \vec y\in A_{n,\epsilon}\colon \mathbf t(\vec X)=\mathbf t(\vec y) \text{ and } \vec X^{(r)}\ne \vec y^{(r)}]
  \\&\le q_{n,\epsilon} + \sum_{\vec x,\vec y\in A_{n,\epsilon}} \mathbb P[\vec X=\vec x]\, \mathbb P[\mathbf t(\vec x)=\mathbf t(\vec y) \text{ and } \vec x^{(r)}\ne \vec y^{(r)}]
  \\&\le q_{n,\epsilon} + 2^{nH(X)+n\epsilon} \, 2^{-r}.
\end{split}\end{equation}
where we abbreviated $q_{n,\epsilon} = \mathbb{P}[\vec X \notin A_{n,\epsilon}]$ and we used the upper bound on the number of elements in $A_{n,\epsilon}$ mentioned above, see also Appendix~\ref{sec_eps_typical}. We choose now some rate $0<R<1-H(X)$ and some $0<\epsilon< 1-H(X)- R$. With $r^*_n=\floor{(1-R)n}$ and $\delta=1-H(X)-R-\epsilon>0$, we get $p_\text{fail}(n,\epsilon,r^*_n)< q_{n,\epsilon}+ 2^{1-n\delta}$. Due to $\lim_{n\to\infty} q_{n,\epsilon}= 0$, we can choose $n$ so large that $p_\text{fail}(n,\epsilon,r_n^*)$ falls below any fixed threshold $p^*>0$. At the same time, the effective rate of remaining qubit pairs $R^*=\frac1n(n-r^*_n)$ is lower bounded by $R^*\ge R$.

Lemma~\ref{lemm_hashing_exact} follows now from these considerations by choosing $R=\frac12(1-h)$ and $\epsilon=\frac14(1-h)$. Here, we abbreviated $h=S(\rho)$ and we use $r=\floor{\frac n2(1+h)}$. Then $n=n(h,F)$ has to be such that $1-p_\text{fail}(n,\epsilon,r)\ge F$. Consequently, the LOCC protocol in Lemma~\ref{lemm_hashing_exact} is given by the following.
\begin{enumerate}
\item
Collect $n=n(h,F)$ qubit pairs $\rho^{\otimes n}$.
\item
Generate $r=\floor{\frac n2(1+h)}$ random bit strings $\mathbf s_1,\dotsc,\mathbf s_r$ of length $n, n-2,\dotsc,n-2r+2$, respectively.
\item \label{step:proto}
Sequentially measure the parity bits $t_1,\dotsc, t_r$ corresponding to $\mathbf s_1,\dotsc,\mathbf s_r$ using LOCC. This destroys $r$ qubit pairs.
\item
Determine a sequence $\vec y\in A_{n,\epsilon}$ with $\epsilon=\frac14(1-h)$ such that $\ket{\psi(\vec y)}$ in the previous step would give the same parities. If no such sequence exists, choose an arbitrary sequence.
\item
Determine $\vec x$ from the sequence $\vec y$ by computing the final state $\ket{\psi(\vec x)}$ of Step~\ref{step:proto} under the assumption that the initial state was $\ket{\psi(\vec y)}$.
\item
Perform the local unitaries achieving $\ket{\psi(\vec x)}\mapsto \ket {\Phi^+}^{\otimes m}$.
\end{enumerate}
It remains to show that $\tr(\sigma_+^{\otimes m}\rho')\ge F$ for the final state $\rho'$ of the protocol. This is the case, since if $\vec x$ is guessed correctly, then the resulting state is $\sigma_+^{\otimes m}$ while otherwise it is some other, unknown $m$-qubit state $\tau$. Hence, $\rho'=(1-p_\text{fail})\sigma_+^{\otimes m} + p_\text{fail}\tau$ and the assertion follows.

\section{Typical sets}
\label{sec_eps_typical}

At the core of the classical part of the hashing protocol \cite{hashingoriginal, hashinglater}, see Section~\ref{sec_hashing}, is a classical result from information theory, see, for example, Theorem~3.1.2 in Ref.~[\onlinecite{cover_2005}]. According to this result, roughly $H(X)$ bits suffice to describe with high probability the value of a random source $X$. In detail, considering $n$ i.i.d.\ random variables $\vec{X} = ( X_1, X_2, \dotsc, X_n)$, a message $\vec{x}=(x_1,x_2,\dotsc,x_n)$ is $\epsilon$-typical, in symbols, $x \in A_{n, \epsilon}$, if its probability of occurrence $P(\vec{x}) = \mathbb{P}( \vec{X} = \vec{x}) $ fulfills
\begin{equation}\label{eq_typical_def}
  2^{-n ( H(X) + \epsilon)} \leq P( \vec{x} ) \leq 2^{-n ( H(X) - \epsilon )}.
\end{equation}
From this definition, we immediately obtain an upper bound on the size $\abs {A_{n,\epsilon}}$ of the $\epsilon$-typical set.
\begin{theorem}\normalfont
The number of $\epsilon$-typical messages is bounded by
\begin{equation}
  \abs{ A_{n,\epsilon}} \leq 2^{n(H(X) + \epsilon )}.
\end{equation}
\end{theorem}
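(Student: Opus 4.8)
The plan is to exploit the lower bound on the probability of each typical message together with the fact that probabilities sum to at most one. By the definition of the $\epsilon$-typical set in Eq.~\eqref{eq_typical_def}, every $\vec x \in A_{n,\epsilon}$ satisfies $P(\vec x) \ge 2^{-n(H(X)+\epsilon)}$. This uniform lower bound is the crucial ingredient: it converts a statement about the total probability mass into a statement about the cardinality of the set.

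First I would observe that, since $A_{n,\epsilon}$ is a subset of all length-$n$ messages and the full probability distribution is normalized, we have
\begin{equation}
  1 \ge \sum_{\vec x} P(\vec x) \ge \sum_{\vec x \in A_{n,\epsilon}} P(\vec x).
\end{equation}
Next I would insert the lower bound from the definition into each summand, yielding
\begin{equation}
  \sum_{\vec x \in A_{n,\epsilon}} P(\vec x) \ge \sum_{\vec x \in A_{n,\epsilon}} 2^{-n(H(X)+\epsilon)} = \abs{A_{n,\epsilon}}\, 2^{-n(H(X)+\epsilon)}.
\end{equation}
Chaining these two inequalities gives $\abs{A_{n,\epsilon}}\, 2^{-n(H(X)+\epsilon)} \le 1$, and multiplying through by $2^{n(H(X)+\epsilon)}$ delivers the asserted bound.

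There is no real obstacle here; the result is essentially a counting argument and follows in a few lines. The only point requiring care is to use the \emph{lower} bound on $P(\vec x)$ rather than the upper bound, since it is the lower bound that controls how many elements can fit under a unit probability budget. I would note in passing that only the left inequality of Eq.~\eqref{eq_typical_def} is needed for this particular statement; the upper bound on $P(\vec x)$ plays no role in bounding the cardinality from above.
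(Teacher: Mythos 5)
Your proof is correct and is essentially identical to the paper's: both use the lower bound $P(\vec x)\ge 2^{-n(H(X)+\epsilon)}$ from the definition of the $\epsilon$-typical set together with normalization of probabilities to conclude $\abs{A_{n,\epsilon}}\,2^{-n(H(X)+\epsilon)}\le 1$. Your remark that only the lower bound in Eq.~\eqref{eq_typical_def} is needed is a correct and worthwhile observation, but the argument itself is the same.
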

\begin{proof}
Each message has at least probability $2^{-n(H(X)+\epsilon)}$. From the normalization of probabilities we obtain $\abs{ A_{n,\epsilon}} \,2^{-n(H(X)+\epsilon)}\le 1$ and the assertion follows.
\end{proof}
Consequently, $nH(X)-n\epsilon$ bits are sufficient to label each $\epsilon$-typical message. Now we quantify the probability $\mathbb{P}( \vec{X} \notin A_{n,\epsilon})$ for a message to be not $\epsilon$-typical. To this end, we rephrase the definition of $\epsilon$-typicality by using
\begin{equation}
  \log_2 P(\vec{x}) = \sum_{j=1}^n \log_2 P(x_j).
\end{equation}
Then Eq.~\eqref{eq_typical_def} reads
\begin{equation}\label{App_EpsTyp}
  \left\lvert - \frac{1}{n} \sum_{j=1}^n \log_2 P(x_j) - H(X) \right\rvert \le \epsilon.
\end{equation}
Moreover we consider the probability of an outcome as a random variable $P(X)$ and note that
\begin{equation}\label{App_Exp_Shannon}
  \mathbb{E} [ \log_2 P(X)] = \sum_x \mathbb{P}(X=x) \log_2 P(x) = -H(X).
\end{equation}

\begin{theorem}\normalfont
The probability that a message is not $\epsilon$-typical converges to zero,
\begin{equation}
\lim_{n\to\infty} \mathbb{P} ( \vec{X} \notin A_{n, \epsilon} )= 0.
\end{equation}
\end{theorem}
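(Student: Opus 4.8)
The plan is to recognize this statement as the asymptotic equipartition property and obtain it from the weak law of large numbers applied to the per-symbol information content. First I would rewrite the event of non-typicality using the reformulation already established in Eq.~\eqref{App_EpsTyp}: a message fails to lie in $A_{n,\epsilon}$ precisely when
\begin{equation}
  \left\lvert -\tfrac1n\sum_{j=1}^n \log_2 P(X_j) - H(X)\right\rvert > \epsilon.
\end{equation}
This exhibits the quantity controlling typicality as an empirical average over the $n$ draws, which is exactly the form to which a law of large numbers applies.

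Next I would introduce the random variables $Y_j = -\log_2 P(X_j)$. Since the $X_j$ are i.i.d., so are the $Y_j$, and by Eq.~\eqref{App_Exp_Shannon} each has expectation $\mathbb E[Y_j]=H(X)$. Because the source $X$ takes only the finitely many values $0,1,2,3$, any outcome with vanishing probability never occurs and can be discarded; the remaining $Y_j$ are therefore bounded, $0\le Y_j\le \max_x[-\log_2 P(x)]<\infty$, and in particular possess a finite variance $\sigma^2=\mathbb E[Y_1^2]-H(X)^2$.

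With this in place, the empirical average $\frac1n\sum_j Y_j$ is the sample mean of i.i.d. variables of mean $H(X)$ and variance $\sigma^2$, so its own variance is $\sigma^2/n$. Chebyshev's inequality then yields
\begin{equation}
  \mathbb P(\vec X\notin A_{n,\epsilon})
  = \mathbb P\!\left(\left\lvert\tfrac1n\sum_{j=1}^n Y_j - H(X)\right\rvert>\epsilon\right)
  \le \frac{\sigma^2}{n\epsilon^2},
\end{equation}
and the right-hand side vanishes as $n\to\infty$, which is the claim.

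The only genuine obstacle is the bookkeeping needed to guarantee that the mean and variance of $Y_1$ are finite; this is precisely where the elementary Chebyshev form of the weak law requires an integrability hypothesis. Here that hypothesis is automatic because the Bell-diagonal source has a finite alphabet, so the boundedness of $Y_1$ makes both moments finite and no truncation or appeal to a stronger law of large numbers is necessary. Everything else is the standard computation above.
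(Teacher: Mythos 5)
Your proof is correct and follows essentially the same route as the paper: both use Eq.~\eqref{App_EpsTyp} and Eq.~\eqref{App_Exp_Shannon} to recast non-typicality as a deviation of the sample mean of the i.i.d.\ variables $-\log_2 P(X_j)$ from their expectation $H(X)$, and then conclude by a law of large numbers. The only difference is that the paper cites the weak law as a black box, while you prove that step directly via Chebyshev's inequality, carefully noting that the finite alphabet makes the $Y_j$ bounded; this is a slight strengthening, since it makes the argument self-contained and yields an explicit $O(1/n)$ bound on $\mathbb{P}(\vec X \notin A_{n,\epsilon})$ rather than mere convergence.
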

\begin{proof}
By using Eq.~\eqref{App_EpsTyp} and Eq.~\eqref{App_Exp_Shannon} we rephrase the statement about $\epsilon$-typical messages as a statement about the arithmetic mean and the expectation value of i.i.d.\ random variables,
\begin{equation}\begin{split}
  \mathbb{P} ( \vec{X} \notin A_{n,\epsilon} )
  &= \mathbb{P} \left( \left\lvert - \frac{1}{n} \sum_{j=1}^n \log_2 P(X_j) - H(X) \right\rvert > \epsilon \right) \\
  &= \mathbb{P} \left( \left\lvert \frac{1}{n} \sum_{j=1}^n \log_2 P(X_j) - \mathbb{E} [ \log_2 P(X) ] \right\rvert > \epsilon \right)
\end{split}\end{equation}
From the weak law of large numbers it follow that the last expression converges to $0$ as $n\to\infty$.
\end{proof}

\end{document}